\newcommand{\probtitle}[1]{\textsc{#1}}
\newcommand{\qedfill}[0]{ }
\newtheorem{remark}{Remark}
\newtheorem{lemma}{Lemma}
\newtheorem{theorem}{Theorem}
\newtheorem{proposition}{Proposition}
\begin{document}

\newcommand\thetitle{The firefighter problem with more than one firefighter on trees}

\title{\textbf{\thetitle}}
\author{Cristina Bazgan \and Morgan Chopin \and Bernard Ries\\
 {\small Universit\'{e} Paris-Dauphine, LAMSADE,}\\
  {\small Place du Marchal de Lattre de Tassigny, 75775 Paris Cedex 16, France.}\\
  {\small \{bazgan, chopin, ries\}@lamsade.dauphine.fr }\\
 }

\date{ }

\maketitle






\begin{abstract}
In this paper we study the complexity of the firefighter
problem and related problems on trees when more than one firefighter is available at each time
step, and answer several open questions of \cite{finbow2009}.
More precisely, when $b \geq 2$ firefighters are allowed at each time step, the problem is NP-complete
for trees of maximum degree $b+2$ and polynomial-time solvable for trees of
maximum degree $b+2$ when the fire breaks out at a vertex of
degree at most $b+1$. Moreover we present a polynomial-time algorithm for a subclass of trees, namely $k$-caterpillars.
\end{abstract}



\section{Introduction}

Modeling a spreading process in a network is a widely studied topic and often relies on a graph theoretical approach (see \cite{chen2008,dreyer2009,finbow2009,kempe2003,ng2008,scott2006}). 
Such processes occur for instance in epidemiology and social sciences. Indeed, the spreading process could be the spread of an infectious disease in a population 
or the spread of opinions through a social network. Different objectives may then be of interest, for instance minimizing the total number of 
infected persons by vaccinating at each time step some particular individuals, or making sure that some specific subset of individuals does not get infected at all, etc...

The spreading process may also represent the spread of a fire. The associated firefighter problem, introduced in \cite{hartnell1995}, 
has been studied intensively in the literature (see for instance \cite{anshelevich2009,cai2008,develin2007,finbow2007,finbow2009,hartnell1995,hartnell2000,IKM11,king2010,macgillivray2003,ng2008}). In this paper, 
we consider some generalizations and variants of this problem which is defined as follows.
Initially, a fire breaks out at some special
vertex $s$ of a graph. At each time step, we have to choose one
vertex which will be protected by a firefighter. Then the fire
spreads to all unprotected neighbors of the vertices on fire. The
process ends when the fire can no longer spread, and then all
vertices that are not on fire are considered as saved. The
objective consists of choosing, at each time step, a vertex which
will be protected by a firefighter such that a maximum number of
vertices in the graph is saved at the end of the process. 

The firefighter problem
was proved to be NP-hard for bipartite graphs
\cite{macgillivray2003}. Much stronger results were proved later
\cite{finbow2007} implying a dichotomy: the firefighter problem is
NP-hard even for trees of maximum degree three and it is solvable
in polynomial-time for graphs with maximum degree three, provided
that the fire breaks out at a vertex of degree at most two.
Moreover, the firefighter problem is NP-hard for cubic graphs
\cite{king2010}. From the approximation point of view, the
firefighter problem is $\frac{e}{e-1}$-approximable on trees \cite{cai2008} and
it is not $n^{1-\varepsilon}$-approximable on general graphs for any $\epsilon \in (0,1)$
\cite{anshelevich2009}, if P$\neq$ NP. Moreover for trees where vertices have at most three children, 
the firefighter problem is $1.3997$-approximable \cite{IKM11}. Finally, the firefighter problem is polynomial-time solvable for caterpillars and P-trees \cite{macgillivray2003}.

A problem related to the firefighter
problem, denoted by {\sc $S$-Fire}, was introduced in
\cite{king2010}. It consists of deciding if there is a strategy
of choosing a vertex to be protected at each time step such that  all vertices
of a given set $S$ are saved. {\sc $S$-Fire} was proved to be
NP-complete for trees of maximum degree three in which every leaf is
at the same distance from the vertex where the fire starts and $S$ is the set of leaves.

In this paper, we consider a generalized version of {\sc
$S$-Fire}. We denote by {\sc $b$-Save}, where $b \geq 1$ is an integer,
the problem which consists of deciding if we can choose at most $b$ vertices
to be protected by firefighters at each time step and save all the vertices
from a given set $S$. Thus, {\sc $S$-Fire} is equivalent to {\sc $1$-Save}.
The optimization version of {\sc $b$-Save} will be denoted by {\sc Max $b$-Save}.
This problem consists of choosing at most $b$ vertices to be protected at each time
step and saving as many vertices as possible from a given set $S$. Hence, {\sc Max $1$-Save}
corresponds to the firefighter problem when $S$ is the set of all vertices of the graph.
{\sc Max $b$-Save} is known to be 2-approximable for trees  when $S$ is the
set of all vertices \cite{hartnell2000}. 

A survey on the firefighter problem and related problems can
be found in \cite{finbow2009}. In this survey, the authors presented a list of
open problems. Here, we will answer three of these open questions (questions 2, 4, and 8).

The first question asks for finding algorithms and complexity results of {\sc $b$-Save}
when $b\geq 2$. We show that for any fixed $b\geq 2$, {\sc
$b$-Save} is NP-complete for trees of maximum degree $b+2$ when $S$ is the set of all leaves. Moreover, we show that for any fixed $b \geq 2$,
{\sc Max $b$-Save} is NP-hard for trees of maximum degree $b+3$ when $S$ is the set of all vertices.
Finally, we show that for any $b \geq 1$, {\sc $b$-Save} is polynomial-time solvable for trees of
maximum degree $b+2$ when the fire breaks out at a vertex of degree at most $b+1$.

The second question asks if there exists a constant $c > 1$ such that the greedy strategy of protecting, at each time step, a vertex of highest degree adjacent to a burning vertex gives a polynomial-time $c$-approximation for the firefighter problem on trees. We give a negative answer to this question.

Finally, the third question asks for finding classes of trees for which the firefighter problem can be solved in
polynomial time. We present a polynomial-time algorithm to solve {\sc Max $b$-Save}, $b \geq 1$, in $k$-caterpillars a subclass of trees.

Our paper is organized as follows.
Definitions, terminology and preliminaries are given in  Section~\ref{s:prelim}.
In Section~\ref{s:trees} we establish a dichotomy on the complexity of {\sc $b$-Save} and show that the greedy strategy mentioned above gives no approximation guarantee. In Section~\ref{s:caterpillars} we  show that \textsc{Max $b$-Save} is polynomial-time solvable for
$k$-caterpillars. Some variants of the \textsc{Max $b$-Save} problem are considered in Section~\ref{s:variants}. Conclusions are given in Section~\ref{s:concl}.

\section{Preliminaries} \label{s:prelim}

All graphs in this paper are undirected, connected, finite and simple. Let $G=(V,E)$ be a graph. An edge in $E$ between vertices $u,v\in V$ will be denoted by $uv$. The \textit{degree} of a vertex $u\in V$, denoted by $deg(v)$, is the number of edges incident to $u$.
We write $G-v$ for the subgraph obtained by deleting a vertex $v$ and all the edges incident to $v$. Similarly, for $A\subseteq V$, we denote by $G-A$ the subgraph of $G$ obtained by deleting the set $A$ and all the edges incident to some vertex in $A$.

\medskip

In order to define the firefighter problem, we use an undirected graph
$G=(V,E)$ and notations of \cite{anshelevich2009}. Each
 vertex in the graph can be in exactly one of the following states:
\textit{burned}, \textit{saved} or \textit{vulnerable}. 
A vertex is said to be burned if it is on fire.
We call a vertex saved if it is either protected by a firefighter --- that is the vertex cannot be burned in subsequent time steps --- or if all paths from any burned vertex to it contains at least one protected vertex.
Any vertex which is neither saved nor burned is called vulnerable.
At time step $t = 0$, all vertices are vulnerable, except vertex $s$,
which is burned. At each time $t
> 0$, at most $b$ vertices can be protected by firefighters and any vulnerable vertex $v$ which is adjacent to a burned
vertex $u$ becomes burned at time $t +
1$, unless it is protected at time step $t$. Burned and saved vertices
remain burned and saved, respectively.

\medskip

Given a graph $G=(V,E)$ and a vertex $s$  initially on fire, a \textit{protection strategy} is a set $\Phi \subseteq V \times T$ where  $T = \{1,2,\ldots,|V|\}$. We say
that a vertex $v$ is protected at time $t \in T$ according to the
protection strategy $\Phi$ if $(v, t) \in \Phi$. A protection
strategy is \textit{valid} with respect to a budget $b$, if the
following two conditions are satisfied:

\begin{enumerate}
\item if $(v, t) \in \Phi$ then $v$ is not burned at time $t$;
\item let $\Phi_{t}=\{(v,t) \in \Phi\}$; then $|\Phi_{t}| \leq b$ for $t = 1,\ldots,|V|$.
\end{enumerate}



Thus at each time $t > 0$, if a vulnerable  vertex $v$ is adjacent to at least one burned vertex and $(v, t) \notin \Phi$, then $v$ gets burned at time $t + 1$.

%
%
%
%

\medskip
We define in the following the problems we study.

\medskip
\noindent
 {\sc $b$-Save}

\noindent\textbf{Input}: An undirected graph $G=(V,E)$, a burned vertex $s\in V$, and a subset $S \subseteq V$.

\noindent\textbf{Question}: Is there a valid strategy
$\Phi$ with respect to budget $b$ such that all vertices from $S$ are saved?

\medskip
\noindent 
{\sc Max $b$-Save}

\noindent\textbf{Input}: An undirected graph $G=(V,E)$, a burned vertex $s\in V$, and a subset $S \subseteq V$.

\noindent\textbf{Output}: A valid strategy $\Phi$  with respect to budget $b$ which maximizes the number of saved vertices that belong to $S$.\\

In the figures of the paper, the burned vertices are represented by  black vertices and the vertices from $S$ are represented by $\square$. A protected vertex is represented by $\oplus$.

\medskip

Notice that the NP-hardness of {\sc $b$-Save} implies the NP-hardness of {\sc Max $b$-Save}. Furthermore, if {\sc Max $b$-Save} is solvable in polynomial-time then so is {\sc $b$-Save}.
%
%
%
%
%
%

\section{Trees} \label{s:trees}

It has been shown in \cite{finbow2007} that {\sc $1$-Save} is NP-complete for trees of maximum degree three using a reduction from not-all-equal 3SAT. Furthermore, {\sc $1$-Save} is polynomial-time solvable for graphs of maximum degree three if the fire breaks out at a vertex of maximum degree two. In this section we generalize these results for any fixed $b\geq 2$.

First of all, we need to define some notions. Let $T$ be a tree and let $s$ be the vertex which is initially burned. From now on, $s$ will be considered as the root of $T$. We define the \textit{level} $k$ of $T$ to be the set of vertices that are at distance exactly $k$ from $s$. The \textit{height} of $T$ is the length of a longest path from $s$ to a leaf. An \textit{ancestor} (resp. \textit{descendant}) of a vertex $v$ in $T$ is any vertex on the path from $s$ to $v$ (resp. from $v$ to a leaf). A \textit{child} of a vertex $v$ in $T$ is an adjacent descendant of $v$. The tree $T$ is said \textit{complete} if every non-leaf vertex has exactly the same number of children.

\begin{remark}
\label{rem:relstrat}
Without loss of generality, we may assume that strategies do not protect a vertex that has a protected ancestor in a tree.
\end{remark}

\begin{remark}
\label{rem:sleave}
For \textsc{$b$-Save} on trees, we may assume without loss of generality that $S$ is the set of leaves. Otherwise for each non-leaf vertex $v \in S$, since we have to save $v$, we can remove the subtree rooted at $v$ such that $v$ becomes a leaf.
\end{remark}

We denote by $\mathcal{T}(r,h,d)$ a complete tree of height $h$ and root $r$ such that every non-leaf vertex has exactly $d$ children and every leaf is at the same distance from the root (see Figure \ref{fig:gadget1}).

\begin{figure}[!h]

\begin{center}
\input{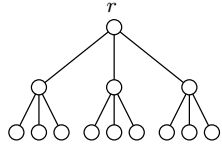}
\end{center}

\caption{$\mathcal{T}(r,2,3)$.}
\label{fig:gadget1}

\end{figure}

For such a tree we obtain the following property.

\begin{lemma}
\label{lem:gadget}
Let $b$ be the number of available firefighters at each time step. Consider a complete tree  $\mathcal{T}(r,h,b+1)$. If the fire breaks out at $r$, then at least one leaf will not be saved.
\end{lemma}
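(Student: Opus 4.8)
The plan is to argue by contradiction: assume some valid strategy saves \emph{every} leaf of $\mathcal{T}(r,h,b+1)$, and derive a numerical impossibility by weighing the firefighters' total protective power against what is needed to seal off all $(b+1)^h$ leaves. First I would invoke Remark~\ref{rem:relstrat} to assume the set of protected vertices contains no protected ancestor of another protected vertex, so that these vertices form an antichain; consequently the subtrees they root are pairwise disjoint. Saving all leaves then means every root-to-leaf path meets a protected vertex, i.e.\ the protected antichain covers all leaves. Writing $a_j$ for the number of protected vertices at level $j$ (note $a_0=0$ since $r$ burns), a protected vertex at level $j$ shields exactly the $(b+1)^{h-j}$ leaves in its subtree, and disjointness gives that saving all leaves requires $\sum_{j=1}^{h} a_j (b+1)^{h-j} \ge (b+1)^h$, equivalently $\sum_{j=1}^{h} a_j (b+1)^{-j} \ge 1$.

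Next I would extract the timing constraint, which is where the budget $b$ enters. Since the fire descends one level per time step, a vertex at level $j$ is reached by the fire at a fixed time and can only be protected before then; concretely any protected level-$j$ vertex must be assigned to one of the first $j$ protection rounds. As each round protects at most $b$ vertices, the protected vertices lying in levels $1,\dots,\ell$ must all be scheduled within the first $\ell$ rounds, giving the prefix bound $\sum_{j=1}^{\ell} a_j \le b\,\ell$ for every $\ell = 1,\dots,h$.

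It then remains to show these two families of inequalities are incompatible. I would maximize $\sum_{j=1}^{h} a_j (b+1)^{-j}$ subject to the prefix constraints $\sum_{j\le\ell} a_j \le b\,\ell$; since the coefficients $(b+1)^{-j}$ are strictly decreasing in $j$, a rearrangement / summation-by-parts argument shows the optimum puts the budget as early as possible, namely $a_j = b$ for all $j$. Substituting gives $\sum_{j=1}^{h} b\,(b+1)^{-j} = 1 - (b+1)^{-h} < 1$, contradicting the covering requirement $\ge 1$. Hence no strategy saves every leaf, and at least one leaf burns.

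I expect the main obstacle to be the timing step rather than the final algebra. The naive temptation is to argue level by level that the burning frontier shrinks by at most $b$ each step and therefore stays nonempty; but this ignores that the firefighters may defer (``bank'') protections, spending several rounds' budget on a single level. The honest fix is precisely the global prefix-budget inequality above, together with verifying two facts carefully: that every protected vertex indeed has a burning parent (so it genuinely sits on the frontier, which follows from Remark~\ref{rem:relstrat}), and that the per-level deadlines combine into the cumulative bound $\sum_{j\le\ell} a_j \le b\,\ell$. Once these are in place, the optimization is routine.
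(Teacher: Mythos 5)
Your proof is correct, but it is not the paper's proof --- in fact, the one-line argument you dismiss as the ``naive temptation'' is essentially verbatim what the paper writes: since each non-leaf vertex has exactly $b+1$ children, at each time step there is at least one new burning vertex, hence some leaf burns. Your objection to that argument is legitimate: as stated it only weighs the $b$ protections of the \emph{current} round against the $b+1$ children of a burning vertex, and silently ignores protections banked in earlier rounds at deeper levels, so the invariant ``the burning frontier stays nonempty'' really does need justification. The paper could close this gap either by first invoking the adjacency normalization (its Lemma~\ref{lem:adj}, which however is only stated \emph{after} this lemma), so that banking never occurs, or by exactly the kind of global accounting you carry out. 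Your route proves the statement instead by a covering/weighting argument: Remark~\ref{rem:relstrat} makes the protected set an antichain, so saving all leaves forces $\sum_{j} a_j (b+1)^{-j} \ge 1$; the validity condition forces the prefix bounds $\sum_{j\le \ell} a_j \le b\ell$; and summation by parts shows the best achievable coverage under these constraints is $1-(b+1)^{-h} < 1$, a contradiction. What each approach buys: the paper's frontier argument is short, local and intuitive, but leaves a genuine (if repairable) gap concerning deferred protections; your argument is longer but self-contained and airtight, and the same potential function $\sum_{j\le t} a_j(b+1)^{-j}$ in fact shows the frontier contains at least one vertex at \emph{every} level, i.e., it proves the invariant the paper merely asserts.
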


\begin{proof}
Since each non-leaf vertex has exactly $b+1$ children, it follows that at each time step there will be at least one new burning vertex. Thus at the end of the process, at least one leaf will be burned.\qedfill
\end{proof}

\noindent
We also give the following preliminary results.

\begin{lemma}
\label{lem:adj}
Among the strategies that save all the leaves (resp. a maximum number of a given subset of vertices) of a tree, there exists one such that each protected vertex is adjacent to a burning vertex.
\end{lemma}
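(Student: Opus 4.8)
The plan is to start from an arbitrary strategy $\Phi$ that saves all the leaves (resp. a maximum number of the given subset of vertices) and to normalise it by an exchange argument, using Remark~\ref{rem:relstrat} to assume that the protected vertices of $\Phi$ form an antichain, i.e. no protected vertex has a protected ancestor. Root the tree at $s$ and, among all strategies with the desired property, I would pick one $\Phi$ minimising the potential $\Psi(\Phi)=\sum_{(v,t)\in\Phi}\mathrm{depth}(v)$, where $\mathrm{depth}(v)$ is the distance from $s$ to $v$. The claim is that in such a minimal $\Phi$ every protected vertex is adjacent to a burning vertex at the time it is protected; the whole argument consists in showing that a protected vertex violating this can be replaced by a shallower one, contradicting minimality.

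Suppose $(v,t)\in\Phi$ with $v$ not adjacent to any burning vertex at time $t$; since fire spreads only downward in a tree, this is equivalent to the parent of $v$ not being burning at time $t$. Let $s=x_0,x_1,\dots,x_k=v$ be the path from the root to $v$, and let $x_i$ be the deepest ancestor of $v$ that is burning at time $t$ (it exists because $s$ burns from time $0$ on, and $i\le k-2$ since the parent $x_{k-1}$ is not burning). Then $x_{i+1}$ is a proper ancestor of $v$, it is not burning at time $t$, and it is adjacent to the burning vertex $x_i$. I would replace $(v,t)$ by $(x_{i+1},t)$. Because the protection stays at the \emph{same} time step, the budget constraint $|\Phi_t|\le b$ is untouched; because the protected vertices form an antichain, $x_{i+1}$ was not already protected and the modified set is still an antichain. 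Since the subtree rooted at $x_{i+1}$ contains the subtree rooted at $v$, and since up to time $t$ the two strategies behave identically while from time $t+1$ on the new one blocks the fire higher up, an easy induction shows that at every time step the burning set of the new strategy is contained in that of $\Phi$. Hence the new strategy is valid and saves at least as many vertices, so it still saves all the leaves (resp. a maximum subset), while $\Psi$ strictly decreases because $\mathrm{depth}(x_{i+1})<\mathrm{depth}(v)$, contradicting the choice of $\Phi$.

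The step I expect to be the crux is precisely the choice of exchange move. The naive idea of \emph{delaying} a premature protection $(v,t)$ until the later moment $t'>t$ at which the parent of $v$ catches fire does make $v$ adjacent to a burning vertex, but it can force two protections into the same time step (several premature protections may have parents igniting simultaneously), thereby violating the budget $b$; a small instance with $b=1$ shows this really happens. Moving the protection \emph{up} to the frontier child $x_{i+1}$ within the same time step avoids this entirely, which is why the sum of depths, rather than a time-based quantity, is the right progress measure. Finally, since $\Psi$ is a non-negative integer that strictly decreases at each replacement, iterating the move from an arbitrary starting strategy also terminates, yielding the desired strategy directly if one prefers an algorithmic rather than an extremal formulation.
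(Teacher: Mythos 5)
Your proof is correct and is essentially the approach the paper relies on: the paper's own ``proof'' is a one-line citation to Observation~4.1 of \cite{macgillivray2003}, whose adaptation to $b$ firefighters is exactly your move-the-protection-up exchange (the same umbrella move the paper makes explicit in Lemma~\ref{lem:room}), and your potential $\Psi$ makes the iteration terminate cleanly. One cosmetic slip: after the swap the protected set need not remain an antichain, since $x_{i+1}$ may be an ancestor of some \emph{other} protected vertex $w\neq v$; this is harmless because the contradiction only needs validity, preservation of the saved set, and the strict decrease of $\Psi$ (or, if you insist on antichains, re-apply Remark~\ref{rem:relstrat}, which only decreases $\Psi$ further).
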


\begin{proof}
This is a straightforward adaptation of observation 4.1 in \cite{macgillivray2003}.\qedfill
\end{proof}

%
%

\begin{lemma}
\label{lem:room}
Let $T$ be a tree and let $\Phi$ be a strategy that saves all the leaves of $T$ using at most $b$ firefighters at each time step. Suppose there exists levels $k$ and $k' > k$ containing $b_{k} \leq b-1$ and $b_{k'} \geq 1$ firefighters, respectively. Then there exists a strategy $\Phi'$ saving all the leaves of $T$ and such that levels $k$ and $k'$ contain $b_{k}+1$ and $b_{k'}-1$ firefighters, respectively.
\end{lemma}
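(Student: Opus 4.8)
The plan is to exploit the fact that, in a tree rooted at the source $s$, protecting a vertex high up dominates protecting any of its descendants. First I would record, using Lemma~\ref{lem:adj} together with the fire dynamics on a tree, that we may assume every firefighter placed on a level-$j$ vertex is placed exactly at time step $j$: a level-$j$ vertex burns at time $j+1$ unless protected, and its only neighbour that can be on fire at the decisive moment is its parent, which burns at time $j$. Hence saying that ``level $j$ contains some firefighters'' is the same as saying that this many firefighters are used at time step $j$, and the budget of at most $b$ firefighters per time step becomes at most $b$ per level. This bookkeeping is what lets us speak of moving a firefighter between levels at all.

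Next I would pick, using $b_{k'}\ge 1$, a vertex $v$ at level $k'$ that is protected by $\Phi$, and let $u$ be the unique ancestor of $v$ lying at level $k$ (a genuine, protectable, non-root vertex, since protecting the burned source is impossible and so $k\ge 1$, while $k<k'$ guarantees $u$ exists). By Remark~\ref{rem:relstrat} the vertex $v$ has no protected ancestor, so in particular $u$ is not protected by $\Phi$, and therefore $u$ is not one of the $b_k$ vertices already protected at level $k$. I would then define $\Phi'$ by deleting the pair corresponding to $v$ at time $k'$ and inserting the pair corresponding to $u$ at time $k$. This immediately yields the required counts, level $k$ now holding $b_k+1\le b$ firefighters and level $k'$ holding $b_{k'}-1$, and the strategy is valid because $u$, being at level $k$, is still vulnerable at time step $k$.

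It then remains to argue that $\Phi'$ saves every leaf. Here I would use that protecting $u$ at time step $k$ blocks the fire from ever entering the subtree $T_u$ rooted at $u$, so all leaves of $T_u$ are saved under $\Phi'$. Since $v$ is a descendant of $u$ we have $T_v\subseteq T_u$, so the leaves that the deleted firefighter at $v$ used to protect are still saved; and because $T_u$ is attached to the rest of the tree only through $u$, deleting the firefighter at $v$ and protecting $u$ leaves the spread of the fire outside $T_u$ entirely unchanged. Consequently every leaf outside $T_u$ has exactly the same fate under $\Phi'$ as under $\Phi$, and as $\Phi$ saved all leaves, so does $\Phi'$.

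The one step that needs care is this last domination argument: one must rule out the possibility that some leaf of $T_u$ was being kept alive in $\Phi$ only through a delicate interplay between the firefighter at $v$ and other, deeper firefighters, an interplay that might collapse once $v$'s firefighter is removed. This worry dissolves precisely because the single protection at $u$ already saves the whole subtree $T_u$ by itself, so any protection that $\Phi$ was spending inside $T_u$ is rendered redundant rather than necessary. I expect this observation to be the crux of the proof, with the level/time-step correspondence being the main routine ingredient.
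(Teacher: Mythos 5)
Your proof is correct and follows essentially the same route as the paper's: pick a protected vertex $v_{k'}$ at level $k'$, use Remark~\ref{rem:relstrat} to get that its ancestor $v_k$ at level $k$ is unprotected, and perform the swap $\Phi' = (\Phi \setminus \{(v_{k'},k')\}) \cup \{(v_k,k)\}$, concluding that the saved set can only grow since $v_k$ is an ancestor of $v_{k'}$. In fact you spell out two points the paper leaves implicit --- the correspondence between levels and time steps, and the argument that protecting $v_k$ saves all of its subtree while leaving the fire outside that subtree unchanged --- so your write-up is a more detailed version of the same proof.
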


\begin{proof}
Let $v_{k'}$ be a protected vertex by strategy $\Phi$ at a level $k' > k$, and let $v_{k}$ be the ancestor of $v_{k'}$ at level $k$. It follows from Remark \ref{rem:relstrat} that we may assume that $v_{k}$ is not protected. We transform strategy $\Phi$ into a strategy $\Phi'$ as follows (see Figure \ref{fig:umbrella}): protect $v_{k}$ at time step $k$ and do not protect $v_{k'}$ at time step $k'$, that is $\Phi' = (\Phi - \{(v_{k'},k')\}) \cup \{(v_{k},k)\}$. Since $v_{k}$ is an ancestor of $v_{k'}$, it follows that using strategy $\Phi'$, we save a subset of vertices that contains the vertices saved by using $\Phi$. Since level $k$ contains at most $b-1$ firefighters it follows that $\Phi'$ is a valid strategy that saves all the leaves of $T$ and levels $k$ and $k'$ contain respectively $b_{k}+1$ and $b_{k'}-1$ firefighters.
\begin{figure}[!h]

\begin{center}
\input{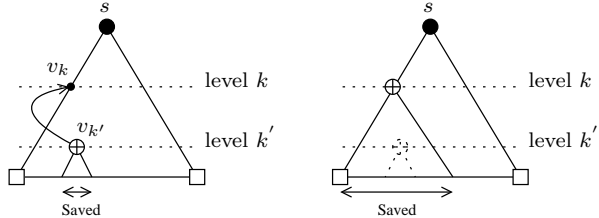}
\end{center}

\caption{Moving up a firefighter leads to a strategy that saves at least the same set of leaves.}
\label{fig:umbrella}

\end{figure}
\qedfill
\end{proof}

%

\noindent
We now give the main result of this section.

\begin{theorem}
\label{th:npc}
For any fixed $b \geq 2$, \textsc{$b$-Save} is NP-complete for
trees of maximum degree $b+2$.
\end{theorem}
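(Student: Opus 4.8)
The plan is to prove membership in NP first, and then NP-hardness by generalizing the maximum-degree-three reduction of \cite{finbow2007}. Membership is immediate: a protection strategy $\Phi$ has at most $b|V|$ elements, and given $\Phi$ one can simulate the spreading process in polynomial time and check both validity (the two conditions in the definition) and that every vertex of $S$ is saved; hence a strategy is a polynomial-size certificate.

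For hardness I would start from the tree $T_1$ produced by the known reduction of \cite{finbow2007}, which for $b=1$ forces a single firefighter to make one binary decision per time step and is equivalent to the satisfiability of a not-all-equal $3$SAT formula once $S$ is taken to be the set of leaves (Remark~\ref{rem:sleave}). The new ingredient needed to lift this to arbitrary $b\geq 2$ is to \emph{lock up} $b-1$ of the $b$ firefighters at every step using the gadget of Lemma~\ref{lem:gadget}: to each front vertex $v$ of $T_1$ I attach $b-1$ extra children $u_1,\dots,u_{b-1}$, each the root of a complete tree $\mathcal{T}(u_i,h,b+1)$. By Lemma~\ref{lem:gadget}, if any $u_i$ burns then a leaf of $\mathcal{T}(u_i,h,b+1)$ is lost; so in any strategy saving all leaves each $u_i$ must be protected at the step its parent burns, consuming exactly $b-1$ firefighters and leaving precisely one free firefighter to replay the original binary decision. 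The degree count is then exactly as required: a front vertex keeps its parent and its two original children and gains $b-1$ children, giving degree $3+(b-1)=b+2$; each internal vertex of a gadget $\mathcal{T}(\cdot,h,b+1)$ has degree $1+(b+1)=b+2$; all remaining vertices keep degree at most three. Thus the resulting tree $T$ has maximum degree $b+2$, and I set $S$ to be its set of leaves.

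For correctness I would prove that the formula is not-all-equal satisfiable if and only if all leaves of $T$ can be saved with budget $b$. From a satisfying assignment one takes the corresponding single-firefighter strategy on $T_1$ and augments each step with the $b-1$ moves protecting the gadget roots $u_i$ active at that step; this is valid and saves every leaf. Conversely, given a strategy saving all leaves of $T$, I would first normalize it: Lemma~\ref{lem:adj} lets me assume every protected vertex is adjacent to the fire, Remark~\ref{rem:relstrat} removes protections below protected vertices, and Lemma~\ref{lem:room} lets me assume all $b$ firefighters are used at every level at which the fire is still active. Since Lemma~\ref{lem:gadget} forces $b-1$ of these firefighters onto the gadget roots at each such level, exactly one firefighter remains free, and deleting the gadgets yields a valid single-firefighter strategy on $T_1$ saving all its leaves, hence a satisfying assignment.

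The step I expect to be the main obstacle is timing synchronization: the reduction is faithful only if at \emph{every} active step the forced demand from the gadgets is exactly $b-1$, leaving exactly one free firefighter. This requires controlling the fire so that there is a single active front carrying a single bundle of $b-1$ gadgets per step, and choosing a uniform height $h$ large enough that each $\mathcal{T}(u_i,h,b+1)$ is genuinely threatening (so that Lemma~\ref{lem:gadget} bites) exactly when its parent burns. The second delicate point, which Lemma~\ref{lem:room} is designed to settle, is ruling out strategies that try to gain an advantage by postponing a gadget root or leaving a firefighter idle early in order to use two later: because every active level is already saturated to its full budget $b$, no such reallocation is possible, and the equivalence with the $b=1$ instance is exact.
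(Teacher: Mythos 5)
Your NP-membership argument is fine, but the hardness construction has a genuine gap, and it is exactly the obstacle you flagged in your last paragraph --- except that it is not a technicality to be patched, it is fatal to the scheme. In any NP-hard instance of \textsc{$1$-Save} on trees of maximum degree three the fire must break out at a vertex of degree three (otherwise, by the dichotomy recalled at the start of Section~\ref{s:trees}, the instance is polynomial-time solvable), so after the very first time step two children of the root are burning, and in the hard instances the fire front keeps two or more vertices for many steps. Your construction attaches $b-1$ gadget roots to \emph{each} front vertex, so at a step where $f\geq 2$ vertices have just burned, the forced demand is $f(b-1)$ protections plus one firefighter to replay the original binary decision; already for $f=2$ this is $2(b-1)+1=2b-1>b$ for every $b\geq 2$. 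Hence the \emph{forward} direction fails: a satisfying assignment does not yield a leaf-saving strategy, and the instances you build are no-instances regardless of the formula. (A further symptom of the same problem: which vertices burn depends on the strategy, so ``the front'' is not a static object to which gadgets can be attached; the only strategy-independent reading is to attach them to every vertex that could burn, which makes the demand blow up as above.) There is also a smaller misuse of Lemma~\ref{lem:room} in your converse direction: it does not let you assume that all $b$ firefighters are used at every active level; it only lets you trade a firefighter at a deep level for one at a shallower level that has spare capacity.

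The paper sidesteps the synchronization problem entirely by not reducing from the SAT instance at all: it reduces \textsc{$b$-Save} on trees of maximum degree $b+2$ to \textsc{$(b+1)$-Save} on trees of maximum degree $b+3$ and then inducts from the known $b=1$ case. A new root $s'$ is joined to the old root $s$ by a path $y_1,\ldots,y_{h-1}$ each of whose vertices carries $b+1$ pendant leaves, forcing the entire budget $b+1$ during the first $h$ steps; a second path $x_1,\ldots,x_h$ hanging from $s'$ carries staggered timer paths $w_{i,1},\ldots,w_{i,h}$, arranged so that in the second phase exactly one leaf $w_{t-h,h}$ becomes threatened per step. So exactly one firefighter per step is consumed outside $T$, leaving exactly $b$ inside $T$, and this accounting holds for \emph{every} strategy by construction, not just for well-behaved ones; the converse direction is then finished by a counting argument plus Lemma~\ref{lem:room} to push any level using $b+1$ firefighters inside $T$ down to $b$. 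If you want to salvage your approach, you would need gadget bundles whose per-step demand tracks the (strategy-dependent) front size exactly, which is essentially the difficulty the paper's path-based, per-level construction is designed to avoid.
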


\begin{proof}
Clearly, \textsc{$b$-Save} belongs to NP. In order to prove its NP-hardness,
we use a polynomial-time reduction from \textsc{$b$-Save}
for trees of maximum degree $b+2$ to \textsc{$(b+1)$-Save} for
trees of maximum degree $b+3$, for any $b\geq 1$. Since \textsc{$b$-Save} is NP-hard for $b=1$ (see \cite{king2010}), it follows that \textsc{$b$-Save} is NP-complete for any fixed $b \geq 2$.

Let $I$ be an instance of \textsc{$b$-Save} consisting of a tree
$T=(V,E)$ of maximum degree $b+2$ rooted at some vertex $s$ and a subset $S \subset V$ which corresponds
to the set of leaves. Let $h$ be the height of $T$. We construct an instance $I'$ of
\textsc{$(b+1)$-Save} consisting of a tree $T'=(V', E')$ of maximum degree $b+3$ rooted at some vertex $s'$ and a subset $S' \subset V'$ which corresponds to the leaves of $T'$ as follows (see Figure \ref{fig:ssavereduc1}): add a vertex $s'$; add two paths $\{y_1y_2,\ldots,y_{h-2}y_{h-1}\}$, $\{x_1x_2,\ldots,x_{h-1}x_h\}$, make $y_1,x_1$ adjacent to $s'$ and make $y_{h-1}$ adjacent to $s$; add vertices $v_1,\ldots,v_{b+1}$ and make them adjacent to $s'$; for every vertex $y_i$, $i=1\ldots,h-1$, add vertices $v_{i,1},\ldots,v_{i,b+1}$ and make them adjacent to $y_i$; for $i=1,\ldots,h$ add a path $\{w_{i,1}w_{i,2},\ldots,w_{i,h-1}w_{i,h}\}$ and make $w_{i,1}$ adjacent to $x_i$. This clearly gives us a tree $T'=(V',E')$ of maximum degree $b+3$ rooted at vertex $s'$ and the set of leaves $S'\subset V'$ is given by $S'=S\cup\bigcup_{i=1}^{h-1} \{v_{i,1},\ldots,v_{i,b+1}\}\cup \{w_{1,h},\ldots,w_{h,h}\}\cup\{v_1,\ldots,v_{b+1}\}$.


%
%

\medskip
We prove now that there exists a strategy $\Phi$ for $I$ that
saves all the vertices in $S$ if and only if there exists a
strategy $\Phi'$ for $I'$ that saves all the vertices in
$S'$.

\medskip
Suppose there exists a strategy $\Phi$ for $I$ that saves all the
vertices in $S$. In order to save all vertices in $S'$, we will
apply strategy $\Phi'$ defined as follows: at time step
$t=1$, we have to protect the vertices $v_{1},\ldots,v_{b+1}$; at each
time step $2\leq t\leq h$, we have to protect the vertices $v_{t-1,1},\ldots,v_{t-1,b+1}$; thus after time step $h$, vertex $s$ is burning as well as vertices $w_{1,h-1},w_{2,h-2},\ldots,$ $w_{h-1,1},x_h$; at each time step $h+1\leq t\leq 2h$, we protect the vertices in $T$ according to $\Phi_{t-h}$ and we use the additional firefighter to protect the leaf $w_{t-h,h}$. This clearly gives us a valid strategy $\Phi'$
saving all the vertices in $S'$.


\medskip
Suppose now that there exists a strategy $\Phi'$ for $I'$ that
saves all the vertices in $S'$. At time step $t=1$, this strategy necessarily consists in protecting vertices $v_1,\ldots,v_{b+1}$. Furthermore, at each time step $2\leq t\leq h$, we have to protect the vertices $v_{t-1,1},\ldots,v_{t-1,b+1}$. It follows from Lemma~\ref{lem:adj} that we may assume that $\Phi'$
is a strategy which, at each time step, protects vertices adjacent to burning vertices.
Thus $\Phi'$ protects, at each time step $k$, at most $b+1$ vertices at level $k$ in $T'$
for $k=h+1,\ldots,2h$.
Let $b_{T}(k)$ be the number of firefighters in the subtree $T$ of $T'$ at level $k$ used by $\Phi'$ and let $\mathcal{B}_{T} = \{k:b_{T}(k) = b+1\}$.
If $\mathcal{B}_{T} = \emptyset$, then for any $k$, $b_{T}(k) \leq b$ and thus the strategy $\Phi'$, restricted to the tree $T$, is a valid strategy for $I$ that saves
all the leaves of $T$. So we may assume now that $\mathcal{B}_{T} \neq \emptyset$.

Let $i^{\ell}$ be the $\ell^{th}$ smallest value in $\mathcal{B}_{T}$.
Consider the case $\ell=1$. Suppose that for any $i < i^{1}$,
$b_{T}(i) \geq b$. From the definition of $i^{\ell}$, it follows that we cannot
have $b_{T}(i) = b+1$, thus $b_{T}(i) = b$ for any $ i < i^{1}$. By
construction, this means that, at each time step $i < i^{1}$, the
additional firefighter protects the vertex $w_{i-h,h}$, $i\geq h+1$.
At time step $i^{1}$, since $b_{T}(i^{1}) = b+1$, the vertex
$w_{i^{1}-h,h}$ is not protected and burns which is a contradiction. Thus
there exists a level $i <i_1$ such that $b_{T}(i) < b$.
It follows from Lemma~\ref{lem:room} that there exists a strategy saving the leaves of $T'$
such that $b_{T}(i) \leq b$ and $b_{T}(i^{1}) = b$. Applying this argument
iteratively for $i^{2},\ldots, i^{|\mathcal{B}_{T}|}$, we obtain a strategy $\Phi^{''}$ that saves all the
vertices in $S'$ and such that for any level $k$, $b_{T}(k)
\leq b$. Thus, the strategy $\Phi^{''}$ restricted to the tree
$T$ is a valid strategy that saves all the leaves of $T$.
\end{proof}

\begin{figure}[!h]

\begin{center}
\input{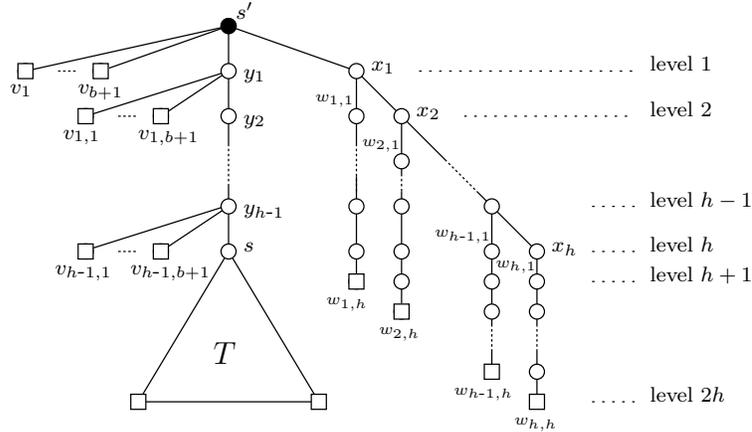}
\end{center}
\caption{The construction of $T'$.}
\label{fig:ssavereduc1}

\end{figure}

Theorem \ref{th:npc} implies that \textsc{Max $b$-Save} is NP-hard for trees of maximum degree $b+2$ when $S$ is the set of leaves.
Notice that Theorem \ref{th:npc} does not imply that \textsc{Max $b$-Save} is NP-hard when $S$ is the set of all vertices. However, the following theorem shows
that this is indeed the case.

 \begin{proposition}
 For any fixed $b \geq 2$, \textsc{Max $b$-Save} is NP-hard for
 trees of maximum degree $b+3$ when $S$ is the set of all vertices.
 \end{proposition}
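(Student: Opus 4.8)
The plan is to reduce from \textsc{$b$-Save} on trees of maximum degree $b+2$ with $S$ the set of leaves, which is NP-complete by Theorem~\ref{th:npc} together with Remark~\ref{rem:sleave}. Given such an instance $(T,s)$ with $q$ leaves $\ell_1,\dots,\ell_q$ (where $\ell_i$ has depth $d_i$), I would build $T'$ by hanging below each $\ell_i$ a complete $(b+2)$-ary tree $G_i$ of height $H$ rooted at $\ell_i$; thus each $\ell_i$ (a degree-one vertex of $T$) acquires $b+2$ children and every internal gadget vertex has degree $b+3$, so $T'$ has maximum degree $b+3$. I keep $s$ as the burning root and set $S'=V(T')$. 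Writing $M=|V(G_i)\setminus\{\ell_i\}|=\Theta((b+2)^H)$, I would choose $H=O(\log|V(T)|)$ so that $M$ is polynomial yet large enough that $\frac{M}{2(b+1)}>|V(T)|$, and set the target to $K=qM$.

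The key structural fact is a quantitative strengthening of Lemma~\ref{lem:gadget}: because each gadget branches with factor $b+2>b$, once the fire enters $G_i$ it cannot be contained. Tracking the burning frontier $x_t$ at gadget-level $t$ gives $x_{t+1}\ge (b+2)x_t-b$, which grows geometrically, so a constant fraction $c_bM$ of $G_i$ burns no matter how the firefighters are placed. The contrast with the $(b+1)$-ary case, where only $O(H)$ vertices are lost, is exactly what forces the degree to be $b+3$ rather than $b+2$.

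For the forward direction, a strategy for $(T,s)$ that saves all leaves, used verbatim in $T'$, never lets the fire reach any $\ell_i$, hence saves all $q$ gadgets and at least $qM=K$ vertices. The converse is the substance of the proof: I would show that any $T'$-strategy $\Phi'$ burning at most $|V(T)|$ vertices (equivalently, saving at least $K$) yields a valid \textsc{$b$-Save} strategy for $T$. First, using Remark~\ref{rem:relstrat} and Lemma~\ref{lem:room}, I would push every protection inside a gadget up to its top level, which does not increase the number of burned vertices and leaves, inside each $G_i$, only protections of children of $\ell_i$. If some $\ell_i$ still burns, all $b+2$ of its children must be protected, since otherwise an unprotected child's entire subtree (of size $\approx M/(b+2)>|V(T)|$) burns, contradicting the bound. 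Finally I would recycle: for each burning $\ell_i$ pick one of its protected children, protected at a step $t_i<d_i$ (at least two children must be protected strictly before $\ell_i$ ignites, as $b+2>b$), replace that protection by a protection of $\ell_i$, and discard the remaining $b+1$ child-protections. A step-by-step count shows the budget $b$ is never exceeded, and the resulting strategy restricted to $T$ saves every leaf, contradicting that $(T,s)$ is a no-instance.

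The main obstacle is precisely this converse. Since any subtree of a tree can be cut off by protecting its root, a firefighter could try to rescue $G_i$ cheaply after $\ell_i$ burns, or pre-emptively protect the small $(b+2)$-vertex separator at the top of $G_i$ instead of saving $\ell_i$ in $T$. The branching factor $b+2$ is what makes every such rescue cost at least $b+2$ protections, while the push-up step (Lemma~\ref{lem:room}) together with the budget-recycling exchange shows these can always be traded for a single protection saving $\ell_i$ in $T$. Making this exchange simultaneously valid across all gadgets and at every time step is the step I expect to demand the most care.
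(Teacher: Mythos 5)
Your reduction is, in substance, the paper's own: the paper also hangs gadgets of total size exceeding $|V(T)|$ below every leaf of $T$, arranged so that each leaf acquires $b+2$ children (it uses $b+2$ disjoint copies of the complete $(b+1)$-ary tree $\mathcal{T}(r,\lceil \log_{b+1}n+1\rceil,b+1)$ where you use a single $(b+2)$-ary tree; both give maximum degree $b+3$), it sets the threshold to ``all gadget vertices saved'', and its forward direction is identical to yours. The difference is in the converse: the paper disposes of it in one asserted sentence (a burned leaf forces at least $|\mathcal{T}|$ burned gadget vertices), whereas you try to actually prove it --- and your argument for it has a genuine gap.

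The gap is the normalization step. Lemma~\ref{lem:room} cannot be used to push every protection inside $G_i$ up to the children of $\ell_i$: that lemma is stated for strategies saving \emph{all} leaves, and, more importantly, moving a protection to an ancestor means placing it at an \emph{earlier} time step, which needs spare budget there --- budget that need not exist; worse, a protection placed at a step after $\ell_i$ ignites can never be moved above the fire front, since by then the top of $G_i$ is burned. Consequently your dichotomy ``either all $b+2$ children of $\ell_i$ are protected, or an unprotected child loses its entire subtree'' is not established, and as a claim about arbitrary strategies it is false: after an unprotected child of $\ell_i$ burns, the fire can still be contained by protections sitting deeper in $G_i$ that were placed during earlier rounds. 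Concretely, the fire can be forced down a single root-to-leaf path of $G_i$ at a cost of $b+1$ protections per level --- $b$ placed just in time plus one pre-positioned before $\ell_i$ ignited --- burning only $O(H)\ll|V(T)|$ gadget vertices; these ``buried firefighter'' strategies are exactly what your case analysis misses. Two repairs are possible, both using your own ingredients. Either normalize differently: replace each protection inside $G_i$ by a protection of its highest \emph{unburned} ancestor at the \emph{same} time step (this is budget-neutral, hence always feasible, and only shrinks the burned set); afterwards every protection in $G_i$ is either at a child of $\ell_i$ or just in time at the fire front, at most $b$ per level, and then your recurrence $d_{j+1}\geq(b+2)d_j-b$ is legitimate below the first burned level, so either all children are protected (and your recycling exchange finishes the proof) or roughly $M/(b+2)^2>|V(T)|$ vertices burn. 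Or re-split the cases directly: if \emph{any} protection lies inside $G_i$ at a step strictly before $\ell_i$ ignites, recycle that single protection into a protection of $\ell_i$ (this also kills the tunneling strategies); if none does, then all protections at gadget levels $\leq j$ fit into the $j$ rounds after $\ell_i$ burns, so they number at most $jb$, and a count of the protected boundary forces geometric growth of the burned front. Note also that your constant is slightly off --- in the bad case the growth starts from one burned child, giving about $(b+1)M/(b+2)^2$ burned vertices rather than $M/(2(b+1))$ --- so $H$ should be chosen against that bound.
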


\begin{proof}
 We construct a polynomial-time reduction from \textsc{$b$-Save} to \textsc{Max $b$-Save} where $b \geq 2$.  
 Let $I$ be an instance of \textsc{$b$-Save} consisting of a tree
 $T=(V,E)$ of maximum degree $b+2$ with $|V| = n$, a burned vertex $s \in V$, and a subset $S \subseteq V$ which corresponds
 to the set of leaves. We construct an instance $I'$ of
 \textsc{Max $b$-Save} consisting of a tree $T'=(V', E')$, a set $S' = V'$, and a positive integer $k$ as follows (see Figure \ref{fig:maxsavereduc}). For every leaf $\ell$ of $T$, add $b+2$ copies $\mathcal{T}_{1,\ell}$, \ldots, $\mathcal{T}_{b+2,\ell}$ of the tree $\mathcal{T}(r, \lceil \mbox{log}_{b+1} n+1 \rceil, b+1)$ such that the root $r_{i,\ell}$ of $\mathcal{T}_{i,\ell}$ is adjacent to $\ell$, for $i \in \{1,\ldots,b+2\}$. Let $|\mathcal{T}|$ denote the cardinality of each of those trees. Notice that each tree $\mathcal{T}_{i,\ell}$ has $|\mathcal{T}| \geq n$ vertices. Set $k = (b+2)|S||\mathcal{T}|$. We will prove that there exists a strategy for $I$ that saves all the vertices in $S$ if and only if there exists a strategy for $I'$ that saves at least $k$ vertices in $S'$.

 Suppose there exists a strategy $\Phi$ for $I$ that saves all the vertices in $S$. Since $S$ is the set of all
 leaves in $T$, it follows that the strategy $\Phi$ applied to $T'$ saves all the vertices of the trees $\mathcal{T}_{i,\ell}$. Notice that we have $(b+2)|S|$ such trees. Thus $\Phi$ saves at least $k=(b+2)|S||\mathcal{T}|$ vertices in $T'$.

Conversely, suppose that no strategy $\Phi$ for $I$ can save all the vertices in $S$. Thus, at least one leaf of $T$ is burned at the end. This necessarily implies that for any strategy $\Phi'$ for $I'$ there is at least one vertex, say $\ell$, of $S$ which is burned. It follows from the construction of $T'$, that in this case there are at least $|\mathcal{T}|$ vertices which will be burned for strategy $\Phi'$. Thus $\Phi'$ saves at most $n-1 + (b+2)|S||\mathcal{T}| - |\mathcal{T}| \leq (b+2)|S||\mathcal{T}| - 1 < k$ vertices.
\end{proof}

 \begin{figure}[!h]
 \begin{center}
 \input{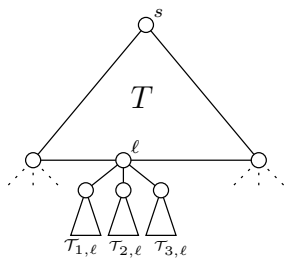}
 \end{center}
 \vspace*{-0.5cm}
 \caption{Construction of $T'$ from the tree $T$ for the case $b=1$.}
 \label{fig:maxsavereduc}
 \end{figure}

The following proposition shows that the sharp separation between the NP-hardness and polynomiality of \textsc{$b$-Save} on trees pointed out in \cite{king2010} is preserved for any fixed $b \geq 2$.

\begin{proposition}
\label{prop:polytree}
Let $b \geq 2$ be any fixed integer and $T$ a tree of maximum degree $b+2$. If the fire breaks out at a vertex
of degree at most $b+1$ then all the leaves of $T$ can be saved if and only if $T$ is not complete. Thus \textsc{$b$-Save} is polynomial-time solvable for trees of maximum degree $b+2$ if the fire breaks out at a vertex of degree at most $b+1$.
\end{proposition}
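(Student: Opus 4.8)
The plan is to prove the stated equivalence and then read off the algorithm. Throughout, I would first record a reformulation of ``complete'' adapted to our setting. Since $s$ has degree at most $b+1$ it has at most $b+1$ children, and every other vertex, having degree at most $b+2$, has at most $b+1$ children as well; hence \emph{every} non-leaf vertex of $T$ has at most $b+1$ children. Under these degree constraints, $T$ is complete if and only if every non-leaf vertex has \emph{exactly} $b+1$ children, so that a non-complete $T$ is precisely one possessing a non-leaf vertex with at most $b$ children. This observation is what links the combinatorial condition to the behaviour of the fire.

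For the direction ``complete $\Rightarrow$ some leaf burns'' I would invoke Lemma~\ref{lem:gadget}. If $T$ is complete, then every non-leaf vertex has exactly $b+1$ children, and the frontier-counting argument behind Lemma~\ref{lem:gadget} applies: with only $b$ firefighters available per step, a burning non-leaf vertex forces at least one of its $b+1$ children to burn, so the set of newly burning vertices is never empty until the fire reaches the bottom, and at least one leaf burns under \emph{any} strategy. Hence no strategy saves all the leaves. I would note in passing that this argument does not actually require the leaves to lie at a common level, so it covers every complete tree in the sense defined here.

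For the converse I would exhibit a strategy when $T$ is not complete. By the reformulation there is a non-leaf vertex $w$ with at most $b$ children; let $v$ be the \emph{first} vertex with at most $b$ children on the path from $s$ to $w$, so that every vertex of this path other than $v$ has exactly $b+1$ children. The strategy routes the fire straight down this path: at each burning vertex $u_i$ strictly above $v$, protect the $b$ children of $u_i$ that do not lie on the path and let the path-child burn; this keeps a single burning frontier and uses exactly $b$ firefighters. When the frontier reaches $v$, protect all of its at most $b$ children, containing the fire. I would then check validity (Remark~\ref{rem:relstrat} and Lemma~\ref{lem:adj} justify protecting only children adjacent to the fire) and that every leaf is saved: the only burning vertices are $s=u_0,\dots,v$, all of which are non-leaves, while every leaf lies in a subtree hanging off a protected child, or in the subtree below $v$, and is therefore saved.

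The algorithmic consequence then follows immediately: deciding \textsc{$b$-Save} on such a tree reduces to testing whether $T$ is complete, equivalently whether some non-leaf vertex has at most $b$ children, which takes linear time, and when the answer is positive the routing strategy above can be output explicitly. I expect the main obstacle to be the converse direction, specifically pinning down the right escape vertex $v$ and verifying that routing the fire to it keeps a single frontier while every off-path subtree (and the subtree below $v$) is protected in time, so that no leaf ever burns; the forward direction is essentially a restatement of Lemma~\ref{lem:gadget}.
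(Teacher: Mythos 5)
Your proposal is correct and follows essentially the same route as the paper's own proof: it directs the fire along a path to a non-leaf vertex with at most $b$ children and contains it there, while in the complete case the frontier-counting argument behind Lemma~\ref{lem:gadget} forces at least one new burning vertex per step and hence a burned leaf. You merely make explicit what the paper leaves implicit --- the reading of ``complete'' as every non-leaf vertex having exactly $b+1$ children under the given degree constraints, the choice of the escape vertex, and the per-step firefighter count --- so the two arguments coincide in substance.
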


\begin{proof}
Notice that in this case we protect the vertices such that there is at most one new burning vertex $v$ at each time step. Moreover, the fire stops when the vertex $v$ has degree at most $b+1$.

Suppose that $T$ is not complete. Then there exists a non-leaf vertex $v$ of degree at most $b+1$. From the previous remark we can direct the fire from $s$ to $v$ and stop it. Hence all the leaves of $T$ are saved.

Suppose that $T$ is complete. Then at each time step, there is at least one new burning vertex. Thus there will be
a leave which will burn at the end of the process.

Clearly, verifying whether a tree is complete can be done in polynomial-time.\qedfill
\end{proof}

\begin{remark}
Notice that Proposition \ref{prop:polytree} also holds for \textsc{Max $b$-Save}. Given a subset $S$ of vertices, we direct the fire to a vertex of degree at most $b+1$ such that the number of burned vertices in $S$ is minimum.
\end{remark}

In \cite{finbow2009}, the authors asked whether there exists a constant $c > 1$ such that the degree greedy algorithm that consists, at each time step, to protect a highest degree vertex adjacent to a burning vertex, gives a polynomial-time $c$-approximation for \textsc{Max $1$-Save} for trees. The following proposition answers this question in the case when $b$ firefighters are available at each time step for any $b \geq 1$.

\begin{proposition}
For any $b \geq 1$, there exists no function $f:N \to (1,+\infty)$ such that the degree greedy algorithm is an $f(n)$-approximation algorithm for \textsc{Max $b$-Save} for trees where $S$ is the set of all vertices. 
\end{proposition}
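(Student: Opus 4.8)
The plan is to answer the question negatively by exhibiting, for every integer $q>b$, a tree $T_q$ on $n=\Theta(q^2)$ vertices on which the degree greedy algorithm saves only $\Theta(bq)=\Theta(b\sqrt n)$ vertices while an optimal strategy saves all but $\Theta(q)=\Theta(\sqrt n)$ of them. Letting $q\to\infty$ then makes the quotient $\mathrm{OPT}/\mathrm{GREEDY}=\Theta(q/b)$ grow without bound, so the fraction of the optimum that greedy guarantees tends to $0$ and no bounded approximation guarantee can survive. The gadget I would use is the following: the burning root $s$ has $b+1$ children, namely $b$ \emph{decoy} star centres $D_1,\dots,D_b$, each carrying $q+1$ leaves (so each $D_i$ has degree $q+2$), together with one \emph{neck} vertex $v$ of degree $q+1$; the neck $v$ has $q$ children $h_1,\dots,h_q$, and each $h_j$ is itself a star centre carrying $q$ leaves (degree $q+1$). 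Here $S$ is the whole vertex set.

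First I would run the greedy algorithm on $T_q$. At $t=1$ the vulnerable neighbours of $s$ are the $D_i$ (degree $q+2$) and $v$ (degree $q+1$), so greedy protects the $b$ decoys and the fire advances to $v$. At $t=2$ the vulnerable neighbours of $v$ are $h_1,\dots,h_q$, all of equal degree, so greedy rescues exactly $b$ of them while the other $q-b$ centres burn; at $t=3$ each burning centre exposes $q$ leaves and greedy can save only $b$ further vertices. Hence, independently of tie-breaking, greedy saves just the $b$ decoy stars, $b$ prize stars, and $b$ extra leaves, a total of $\Theta(bq)$ vertices. In parallel I would analyse the optimum: protecting $v$ at $t=1$ (together with $b-1$ of the decoys) seals off the whole prize, so the optimum saves $v$ and all $q$ stars $h_j$, i.e. $\Theta(q^2)=n-\Theta(q)$ vertices. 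Dividing gives the ratio $\Theta(q/b)$, which exceeds any prescribed value once $q$ is large enough.

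The crux, and the step I expect to be the main obstacle, is ruling out any \emph{recovery} by greedy. A single low-degree edge leading to a large subtree is worthless for a lower bound, since one step later the fire sits next to the high-degree root of that subtree and greedy simply protects it and saves everything. My construction blocks this in two linked ways that I would have to verify carefully. First, the prize is reached only after the fire has been funnelled through $v$ into $q>b$ \emph{simultaneous} fronts $h_1,\dots,h_q$; by a counting argument analogous to Lemma~\ref{lem:gadget}, $b$ firefighters cannot cover more than $b$ of these fronts, so a positive fraction of the prize is provably lost whatever greedy does. Second, at the single earlier moment where one protected vertex ($v$) would still save the entire prize, the $b$ decoys have \emph{strictly} larger degree than $v$ and therefore monopolise greedy's budget, forcing the fire past the neck; this is where the bookkeeping $\deg(D_i)=q+2>q+1=\deg(v)$ is essential and must hold for every $b\ge1$. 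The remaining routine points are that the $h_j$ are mutually indistinguishable, so no tie-breaking rule rescues more than $b$ of them, and that $v$ is adjacent to the burning root and hence a legal first protection for the optimal strategy.
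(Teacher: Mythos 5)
Your proposal is correct, and while it shares with the paper's proof the basic bait idea --- high-degree decoy stars that lure the degree greedy away from a single vertex whose protection would save a huge subtree --- the construction and the loss mechanism are genuinely different. The paper attaches to $s$ an exponentially large prize, a complete $(b+1)$-ary tree $\mathcal{T}(r,h-1,b+1)$ in which the fire can never be contained once $r$ burns (Lemma~\ref{lem:gadget}), together with $h-1$ decoy stars of degree $b+3$ at increasing distances from $s$ on separate paths, so that greedy is decoyed at \emph{every} time step while the prize burns in its entirety; the optimum protects $r$ at step $1$ and then follows the decoys. Your prize is instead only polynomial --- $q$ stars behind a neck $v$ --- and greedy loses not because the prize is intrinsically unsaveable but because, once $v$ burns, the fire forks into $q>b$ fronts of which greedy can cover only $b$. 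Each approach buys something. The paper's exponential prize yields a gap of order $n/\log n$, much larger than your $\Theta(\sqrt{n}/b)$; read literally the proposition quantifies over \emph{all} functions $f$, which no construction can achieve (greedy always saves the vertices it protects, so it is trivially an $n$-approximation), but the larger gap rules out more functions, and both gaps answer the question actually posed in \cite{finbow2009}, namely that no constant $c$ works. Conversely, your construction handles the budget $b$ more carefully, and this is not a cosmetic point: you install exactly $b$ decoys of degree strictly greater than $\deg(v)$, so that greedy's entire budget is provably consumed at the one moment when protecting $v$ would save the prize. The paper places a single decoy per level, so under the natural reading of degree greedy with budget $b$ (protect the $b$ highest-degree vertices adjacent to the fire) its analysis breaks for $b\geq 2$: at time step $1$ the second firefighter would protect $r$ itself, whose degree $b+2$ exceeds the degree $2$ of the path vertices $u_i$, and the whole prize would be saved. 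The paper's argument is thus airtight only for $b=1$, or for a greedy that places one firefighter per step; yours is robust under either reading, since even a one-firefighter-per-step greedy loses all but $\Theta(q)$ of your prize.
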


\begin{proof}
Consider a tree $\mathcal{T}(r, h-1, b+1)$ where $h$ is a positive integer. Add a vertex $s$ adjacent to $r$ and a vertex $v_1$ adjacent to $s$; for $i = 2, \ldots, h-1$, add a path of length $i-1$ with endpoints $u_i$ and $v_i$ such that $u_i$ is adjacent to $s$; finally, for $i = 1, \ldots, h-1$, add $b+2$ vertices adjacent to $v_i$ (see Figure \ref{fig:greedegree}).

Notice that the degree greedy algorithm protects vertices in the following order: $v_1, \ldots, v_{h-1}$. Thus it saves $g_h = (h-1)(b+2)$ vertices. However, it is not difficult to see that the optimal solution protects vertices  in the following order: $r, v_2, \ldots, v_{h-1}$. Thus, in an optimal solution we save $opt_h = (h-2)(b+2) + \sum_{i=0}^{h-1} (b+1)^{i}$ vertices. Since $\frac{opt_h}{g_h} \rightarrow +\infty$ when $h \rightarrow +\infty$, the result follows.
\end{proof}

\begin{figure}[!h]
\begin{center}
\input{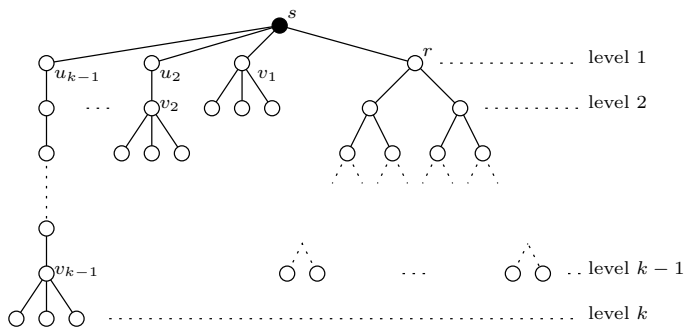}
\end{center}
\caption{Instance where the degree greedy algorithm gives no approximation guarantee for the case $b=1$. Since here $S=V$, we did not represent vertices in $S$ by squares.}
\label{fig:greedegree}
\end{figure}

\section{$k$-caterpillars}\label{s:caterpillars}
 
In this section, we will present a subclass of trees for which \probtitle{Max $b$-Save} is polynomial-time solvable for $b \geq 1$.

A \textit{caterpillar} is a tree $T$ such that the vertices of $T$ with degree at least $2$ induce a path. In other words, a caterpillar $T$ consists of a path $P$ such that all edges have at least one endpoint in $P$. A \textit{$k$-caterpillar}, $k\geq 1$, is a caterpillar in which any pending edge $uv$, with $u\in V(P)$, $v\not \in V(P)$ (\textit{i.e.,} any edge with exactly one endpoint in $P$) may be replaced by a path of length at most $k$ (see Figure \ref{fig:caterstar}). This path is then called a \textit{leg} of the $k$-caterpillar at vertex $u$. Thus a caterpillar is a $1$-caterpillar.

A \textit{star} is a tree consisting of one vertex, called the \textit{center} of the star, adjacent to all the others. Thus a star on $n$ vertices is isomorphic to the complete bipartite graph $K_{1,n-1}$. A \textit{$k$-star}, $k\geq 1$, is a tree obtained from a star in which any edge may be replaced by a path of length at most $k$ (see Figure \ref{fig:caterstar}). Thus a star is a $1$-star.
Notice that a $k$-star is a special case of a $k$-caterpillar

\begin{figure}[!h]
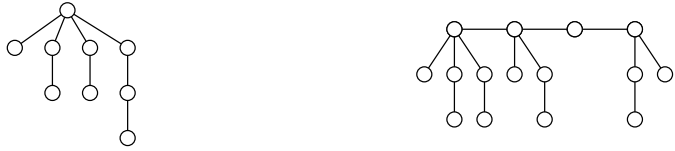

\begin{center}
\begin{minipage}{.90\linewidth}
   \begin{minipage}{.46\linewidth}
   \begin{center}
      \input{star.pstex_t}
   \end{center}
   \end{minipage}
   \begin{minipage}{.46\linewidth}
   \begin{center}
      \input{cater.pstex_t}
   \end{center}
   \end{minipage}
\end{minipage}
\end{center}

\caption{A $3$-star (left) and a $2$-caterpillar (right).}
\label{fig:caterstar}

\end{figure}

In \cite{macgillivray2003}, the authors showed that the degree greedy algorithm gives an optimal solution for \probtitle{Max $1$-Save} on caterpillars when $S=V$. However, this result does not hold
for $k$-caterpillars, see for instance Figure \ref{fig:caternopt}.

\begin{figure}[!h]
\begin{center}
\input{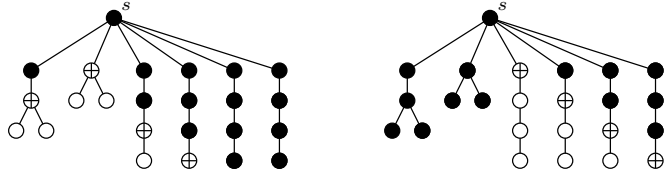}
\end{center}

\caption{A $k$-caterpillar for which the degree greedy algorithm (left) does not give the optimal solution (right). Since here $S=V$, we did not represent vertices in $S$ by squares.}
\label{fig:caternopt}

\end{figure}

In this section,
we give a polynomial-time algorithm for \probtitle{Max $b$-Save} for $k$-caterpillars for any $b \geq 1$ and $S \subseteq V$. In order to prove our main result of this section we first need to show the following.

\begin{theorem}
\label{th:maxbssave_kstar}
For any $k\geq 1$, $b\geq 1$, \probtitle{Max $b$-Save} is polynomial-time solvable for $k$-stars.
\end{theorem}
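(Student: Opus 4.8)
The plan is to reduce \textsc{Max $b$-Save} on a $k$-star to a constant-times-$k$ number of max-profit scheduling (transportation) problems, each solvable by min-cost flow in polynomial time. First I would root the tree at $s$. By Remark \ref{rem:relstrat} and Lemma \ref{lem:adj}, I may assume that the set of protected vertices forms an antichain of the rooted tree (no protected vertex lies strictly below another) and that a vertex at distance $d$ from $s$, if protected, is protected at some time step $t \le d$. Since the graph is a tree, a vertex of $S$ is saved exactly when some vertex on its path to $s$ is protected in time; hence for an antichain $A$ of protected vertices the saved vertices of $S$ are precisely those lying in the pairwise disjoint subtrees hanging below the vertices of $A$, and the profit splits as $\sum_{u\in A} w(u)$, where $w(u)$ is the number of $S$-vertices in the subtree of $u$. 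Finally, scheduling each chosen $u$ (whose deadline is $\mathrm{dist}(s,u)$) among the $b$ firefighters available per time step is feasible exactly when, for every $T$, the number of chosen vertices at distance at most $T$ is at most $bT$ (the earliest-deadline-first condition); Lemma \ref{lem:room} is precisely what allows a firefighter to be used before the fire reaches its target, so this prefix condition is both necessary and sufficient. Thus the problem becomes: choose a maximum-weight antichain subject to this prefix-budget constraint.

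Next I would exploit that the center $c$ is the only vertex of degree at least $3$. If $s=c$, the rooted tree is a disjoint union of paths (the legs), so an antichain contains at most one vertex per leg. If $s\neq c$, let $a=\mathrm{dist}(s,c)$; rooted at $s$ the tree consists of a \emph{near path} $N$ (from $s$ to the leaf of its own leg), an \emph{approach path} $p_1,\dots,p_a=c$, and the remaining legs hanging at $c$. Here an antichain contains at most one vertex of $N$, and on the center side it either contains a single approach-path vertex $p_j$ (which dominates, hence saves, the entire center side at once) or no approach vertex and then at most one vertex per remaining leg.

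For the algorithm, when $s=c$ I would set up a max-profit assignment: one node per leg, one node of capacity $b$ per time step $1,\dots,2k$, and an edge of weight $w_{i,d}$ (the suffix $S$-count of leg $i$ from distance $d$) from leg $i$ to every time step $t\le d$, modelling that cutting leg $i$ at distance $d$ may use any firefighter up to its deadline $d$. Its optimum is computed by min-cost flow, and since the constraint matrix is an interval/network matrix it is totally unimodular, so the optimum is integral and found in polynomial time. When $s\neq c$ I would enumerate the $O(k)$ choices for the center-side antichain: either cut the approach path at $p_j$ for some $j\in\{1,\dots,a\}$ (saving the whole center side while spending one firefighter with deadline $j$), or make no approach cut, in which case the fire enters every remaining leg at time $a$ and each such leg becomes a chain with base time $a$. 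In each scenario the residual choices lie on pairwise disjoint chains (the near path, together with the other legs in the no-approach-cut case), so the residual is again a max-profit assignment to the $b$-capacitated time steps, solved by the same min-cost flow after reserving the slot used by the approach cut. Returning the best value over the $s=c$ computation and all $O(k)$ scenarios yields the optimum.

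The step I expect to be most delicate is the budget bookkeeping, namely transferring the per-time-step budget into the prefix/earliest-deadline-first condition---exactly where Lemma \ref{lem:room} is essential---and then faithfully encoding that condition as a transportation problem whose constraint matrix is totally unimodular, so that a fractional optimum is automatically integral. The case $s\neq c$ adds the coordination of the two fire fronts and the branching at $c$ under a single global budget; enumerating the unique approach-path cut is what decouples the center side from the near path and reduces each scenario to a clean assignment. The main correctness point to verify is that no advantageous strategy is lost by this enumeration, in particular that a cut at $p_j$ never needs to be combined with further cuts inside the legs it already dominates.
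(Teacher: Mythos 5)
Your proposal is correct, and its core coincides with the paper's proof: for the case where the fire breaks out at the center, both arguments reduce \probtitle{Max $b$-Save} to a min-cost-flow/assignment instance whose two sides are the legs and the time steps of capacity $b$, with the profit of choosing a cut vertex on a leg equal to the number of $S$-vertices on the suffix of that leg. The differences are in the surrounding bookkeeping. First, the paper uses Lemma~\ref{lem:adj} to normalize every protection to be adjacent to the fire front, so a vertex at level $i$ is protected exactly at time step $i$ and each level node $L_i$ only carries arcs to its own level; you instead allow protection at any time $t\le d$ and certify schedulability by the prefix (earliest-deadline-first) condition. Both are sound, but note that the ability to protect a vertex before the fire is adjacent to it follows directly from the definition of a valid strategy (a protected vertex need only be unburned at its protection time), not from Lemma~\ref{lem:room}, which is about shifting protections between levels of a leaf-saving strategy; your citation there is a misattribution, though nothing in your argument rests on it. Second, for $s$ different from the center the paper exploits that such an $s$ has degree at most $2$ in a $k$-star and finishes with a short case analysis (for $b\ge 2$ protect both neighbors; for $b=1$ compare two explicit strategies, one recursing to the center case on $G-(Q\setminus\{u_2\})$), whereas you enumerate the $O(k)$ possible cuts on the path from $s$ toward the center and solve one flow per scenario. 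Your route is heavier but more uniform, and it is essentially the paper's own technique for $k$-caterpillars (Theorem~\ref{th:maxbssave_kcater} together with Remark~\ref{kstar_general}, where spine cuts are enumerated and each scenario becomes a $k$-star instance with adjusted per-step budgets), so it would let the star and caterpillar cases be handled by one argument. A small bonus of your ``at most one cut per leg'' formulation: the paper forces exactly one protected vertex per leg by giving the source supply $d$, which makes its flow instance infeasible, e.g., for a star $K_{1,d}$ with $d>b$; your relaxed version avoids this minor glitch while computing the same optimum.
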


\begin{proof}
We construct a polynomial-time reduction from \probtitle{Max $b$-Save} to the \probtitle{Min Cost Flow} problem which is known to be polynomial-time solvable (see for instance \cite{orlin}).
Let $G=(V,E)$ be a $k$-star. First assume that $s\in V$ is the center of $G$. Let $d=deg(s)$. Let $P_1=\{sv_{11},v_{11}v_{21},\ldots,$ $v_{(p_1-1)1}v_{p_11}\}$, $\ldots,$ $P_{d}=\{sv_{1d},v_{1d}v_{2d},\ldots,v_{(p_d-1)d}v_{p_dd}\}$ be the maximal paths of $G$ starting at vertex $s$, with $p_1,\ldots,p_d\leq k$ and $v_{0j}=s$, for $j=1,\ldots,d$, if it exists. Let $p=\max\{p_1,\ldots,p_d\}$. For each vertex $v_{ij}$ in these paths, we define $S_{ij}=\{v_{ij},v_{(i+1)j},\ldots v_{p_jj}\}\cap S$. Notice that we may assume that every path $P_j$ contains at least one vertex of $S$ (otherwise we may delete $V(P_j)\setminus\{s\}$).

We construct an auxiliary digraph $G'=(V',U')$ (see Figure \ref{fig:mincostflow}), where $V'=\{L_1,\ldots,L_p\}\cup \{C_1,\ldots,C_d\}$ $\cup \{\ell,r\}$ and $U'=\{(L_i,C_j)|\ v_{ij}\in P_j\}\cup \{(\ell,L_i)|\ i=1,\ldots,p\}\cup \{(C_j,r)|\ j=1,\ldots,d\}$. In this digraph $G'$, we associate with each arc $(L_i,C_j)$, a cost $u(i,j)=-|S_{ij}|$. All other arcs have cost zero. Furthermore we associate with each arc $(\ell,L_i)$ a capacity $c(\ell,i)=b$, with each arc $(L_i,C_j)$ a capacity $c(i,j)=1$ and with each arc $(C_j,r)$ a capacity $c(j,r)=1$. Finally we associate a supply of value $d$ with vertex $\ell$ and a demand of value $-d$ with vertex $r$ (all other vertices have a supply and a demand equal to zero). Thus we obtain an instance of \probtitle{Min Cost Flow} (we want to satisfy the supply and demand of each vertex with a minimum total cost and such that the capacity constraints are respected) and clearly $G'$ can be obtained from $G$ in polynomial-time.\\

\begin{figure}[!h]
\begin{center}
\input{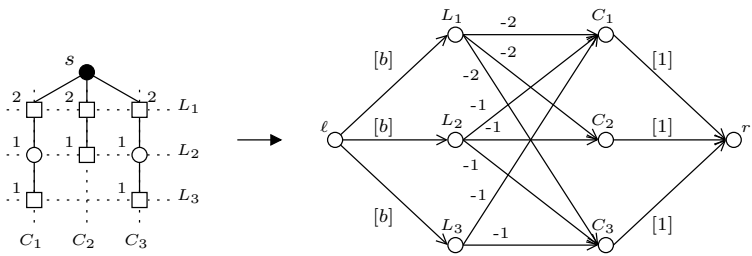}
\end{center}

\caption{The auxiliary digraph $G'$.}
\label{fig:mincostflow}

\end{figure}

We show now that solving \probtitle{Max $b$-Save} in $G$ is equivalent to solving \probtitle{Min Cost Flow} in $G'$. Consider a feasible solution of \probtitle{Max $b$-Save} in $G$ of value $\nu$. We may assume without loss of generality (see Remark \ref{rem:relstrat}) that at most one vertex is protected in each path $P_j$, $j\in \{1,\ldots,d\}$, and (see Lemma \ref{lem:adj}) that at most $b$ vertices are protected in each set $V_i=\{v_{i1},v_{i,2},\ldots,v_{id}\}$, $i\in \{1,\ldots,p\}$ (notice that some of these vertices $v_{ij}, j=1,\ldots,d$, may not exist in $G$). Let $\mathcal{D}=\{v_{ij}|\ v_{ij}\ \mbox{is protected},\ i\in \{1,\ldots,p\}, j\in \{1,\ldots,d\}\}$. Thus $\nu=\sum_{v_{ij}\in \mathcal{D}}|S_{ij}|$. Consider now some vertex $v_{ij}\in \mathcal{D}$. Then in $G'$, we will use one flow unit on the path $\ell$-$L_i$-$C_j$-$r$. Repeating this procedure for every vertex in $\mathcal{D}$, we obtain a flow in $G'$ of value $|\mathcal{D}|$ and of cost $\sum_{v_{ij}\in\mathcal{D}}u(i,j)=\sum_{v_{ij}\in\mathcal{D}}-|S_{ij}|=-\nu$. Since at most $b$ vertices are protected in each set $V_i$, it follows that at most $b$ units of flow use the arc $(\ell,L_i)$, for $i\in \{1,\ldots,p\}$. Furthermore, since exactly one vertex is protected in each path $P_j$, it follows that exactly one flow unit uses the arc $(C_j,r)$ for $j\in\{1,\ldots,d\}$. Hence, we obtain a feasible solution of \probtitle{Min Cost Flow} in $G'$.

Conversely, consider now a feasible solution of \probtitle{Min Cost Flow} in $G'$ of value $-\mu$. Let $\mathcal{A}$ be the set of arcs $(L_i,C_j)$ used by a flow unit, $i\in \{1,\ldots,p\}$, $j\in \{1,\ldots,d\}$. Thus $-\mu=\sum_{(L_i,C_j)\in \mathcal{A}}-|S_{ij}|$. For each flow unit on a path $\ell$-$L_i$-$C_j$-$r$, we choose vertex $v_{ij}$ in $G$ to be protected, for $i\in \{1,\ldots,p\}$, $j\in \{1,\ldots,d\}$. Since the capacity of an arc $(\ell,L_i)$ is $b$, at most $b$ vertices in $V_i$ will be chosen to be protected, $i\in \{1,\ldots,p\}$. Let us denote by $V_{i}^{*}$ the set of vertices in $V_i$ chosen to be protected. Furthermore, since the capacity of an arc $(C_j,r)$ is one, exactly one vertex in each path $P_j$ will be chosen to be protected, $j\in \{1,\ldots,d\}$. Thus, if we protect at each time step $i$ the vertices in $V_{i}^{*}$, we obtain a feasible solution of \probtitle{Max $b$-Save} in $G$ of value $\sum_{i}\sum_{v_{ij}\in V_{i}^{*}}|S_{ij}|=\mu$.\\

Finally, we have to consider the case when $s$ is not the center of $G$. The case when $s$ has degree one is trivial. Thus we may assume now that $deg(s)=2$. If $b\geq 2$, we are done. Thus we may assume now that $b=1$. If both neighbors of $s$ are in $S$, then the optimal solution is clearly $|S|-1$. If both neighbors of $s$ are not in $S$, then the optimal solution is clearly $|S|$. Hence the only case remaining is when exactly one neighbor of $s$ is in $S$. Let $u_1,u_2$ be the neighbors of $s$ such that $u_1\in S,u_2\not\in S$. If $u_2$ is not the center of $G$, the optimal solution is clearly $|S|$. Thus we may assume now that $u_2$ is the center of $G$. Let $Q$ denote the set of vertices of the unique maximal path starting at vertex $u_2$ and containg $u_1$. In that case we have to compare the value of two solutions: (i) $|S|-1$ which is the value of the solution obtained by protecting first $u_2$ and then, during the second time step, we protect the neighbor of $u_1$ which is not $s$ (if it exists); (ii) the value of the solution obtained by protecting first $u_1$ and then applying our algorithm described above to the graph $G-(Q\setminus \{u_2\})$ (\textit{i.e.,} by reducing our problem to a \probtitle{Min Cost Flow} problem).\qedfill
\end{proof}

\begin{remark}
\label{kstar_general}
Notice that the polynomial reduction from \probtitle{Max $b$-Save} to \probtitle{Min Cost Flow} described in the proof of Theorem \ref{th:maxbssave_kstar} is still valid if the number of vertices that can be protected at each time step is not constant (for instance if we are allowed to protect at most $b_1$ vertices during the first time step, $b_2$ vertices during the second time step, etc...). In that case we just need to adapt the capacity of the arcs $(\ell,L_i)$ accordingly.

Furthermore the polynomial reduction remains valid in the case where some of the vertices in a set $V_i$ are not allowed to be protected during time step $i$. In this case we simply do not put an arc from $L_i$ to the corresponding vertices $C_j$ in $G'$.
\end{remark}

Consider now a $k$-caterpillar $G=(V,E)$. Let $P$ be the path in the caterpillar from which $G$ has been obtained, which is induced by vertices of degree at least two. We will call $P$ the \textit{spine} of the $k$-caterpillar.

We are now ready to prove the main result of this section.

\begin{theorem}
\label{th:maxbssave_kcater}
For any $k \geq 1$, $b \geq 1$, \probtitle{Max $b$-Save} is polynomial-time solvable for $k$-caterpillars.
\end{theorem}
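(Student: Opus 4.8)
The plan is to reduce, in polynomial time, an instance of \probtitle{Max $b$-Save} on a $k$-caterpillar to polynomially many instances on generalized $k$-stars, each of which is solvable by Theorem~\ref{th:maxbssave_kstar} together with Remark~\ref{kstar_general}. Fix the spine $P=u_1u_2\cdots u_m$ and root the tree at $s$. First I would argue that the behaviour of the fire on the spine is essentially one-dimensional: since every leg is a pendant path, the fire can reach a spine vertex only by travelling along $P$, so the set of burned spine vertices is an interval containing the attachment vertex of $s$, and a spine vertex at spine-distance $\delta$ from $s$ burns exactly at time $\delta$ unless it is protected earlier. By Remark~\ref{rem:relstrat} at most one spine vertex on each side of $s$ is ever protected, and protecting such a \emph{blocker} saves it together with everything beyond it (the spine tail past it and all legs attached to that tail). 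When $s$ lies on a leg rather than on $P$, the fire travels up that leg to its attachment vertex and simultaneously down towards the leaf; the downward part is a single path treated directly, while blocking the upward part before it reaches $P$ is just one additional stopping option, so this case reduces to the one where the fire originates at a spine vertex, now with a known time offset.

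Next I would enumerate the two \emph{stopping configurations}: a left stopping vertex (either some $u_i$ that is protected, or the option that the fire runs off the left end of the spine), the symmetric right stopping vertex, and the two time steps at which the chosen blockers are protected (each blocker may be protected at any time up to the moment the fire would reach it, giving $O(m+k)$ choices per side). There are only polynomially many such configurations. Fixing one of them determines exactly which spine vertices burn and when each of them burns, contributes a fixed number of saved vertices (everything beyond the two blockers), and consumes one unit of budget at each of the two blocker time steps.

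For a fixed configuration the only remaining decisions concern the legs hanging off the burned spine vertices, under the \emph{residual} budget obtained from $b$ by subtracting the two blocker usages --- that is, a time-varying budget. Each such leg is a pendant path of length at most $k$; if it hangs off a spine vertex that burns at time $\delta$, then its vertex at leg-depth $d$ may be protected only up to the time step at which the fire, having reached that spine vertex at time $\delta$, would arrive at the vertex, and at most one vertex of the leg is ever protected (Remark~\ref{rem:relstrat}, Lemma~\ref{lem:adj}). This is precisely a generalized $k$-star: the burned spine plays the role of the centre, the legs play the role of the star paths, the different burn times $\delta$ are encoded as forbidden protections during the early time steps, and the residual time-varying budget is exactly the generalization allowed by Remark~\ref{kstar_general}. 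Hence each configuration yields, in polynomial time, a \probtitle{Min Cost Flow} instance whose optimum equals the largest number of leg-vertices of $S$ that can be saved under that configuration; adding the fixed contribution gives the value of the configuration, and returning the maximum over all (polynomially many) configurations solves \probtitle{Max $b$-Save}.

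For correctness one checks that every valid strategy induces exactly one configuration and that, conversely, every configuration is realized by a valid strategy; the overall running time is polynomial because there are polynomially many configurations and each flow is solved in polynomial time. The hard part will be the bookkeeping that makes this correspondence exact: I must verify that the per-time-step budget is shared faithfully across all legs and both blockers simultaneously, that the burn-time offsets $\delta$ translate into precisely the correct set of admissible (time, leg)-protections (and hence into the right forbidden arcs of the flow network of Remark~\ref{kstar_general}), and that the $s$-on-a-leg case and the ``fire runs off the end'' options are all captured by the enumeration.
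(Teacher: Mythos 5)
Your proposal is correct and follows essentially the same route as the paper's proof: enumerate the (at most one per side, by Remark~\ref{rem:relstrat}) spine blockers, observe that the legs hanging off the burned spine segment form a generalized $k$-star instance with forbidden protections and a time-varying residual budget solvable via Theorem~\ref{th:maxbssave_kstar} and Remark~\ref{kstar_general}, and treat the case $s \notin V(P)$ separately. The only cosmetic differences are that the paper uses Lemma~\ref{lem:adj} to fix each blocker's protection time (so it enumerates only vertex pairs, not times) and realizes the star instance by an explicit $(k+p)$-star construction with forbidden connecting paths, whereas you enumerate protection times as well and encode the burn-time offsets directly as constraints.
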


\begin{proof}
Let $G=(V,E)$ be a $k$-caterpillar and let $P=\{v_1v_2,v_2v_3,\ldots,v_{p-1}v_p\}$ be the spine of $G$. First assume that $s$ is a vertex of $P$, say $s=v_i$, $i\in \{1,\ldots,p\}$. Let $P_1=\{v_1v_2,\ldots,v_{i-2}v_{i-1}\}$ and $P_2=\{v_{i+1}v_{i+2},\ldots,v_{p-1}v_p\}$. It follows from Remark \ref{rem:relstrat} that we may assume that at most one vertex is protected in $P_1$ and at most one vertex is protected in $P_2$. Consider a strategy in which we decide to protect exactly two vertices of $P$, say vertex $v_j$, for $j\in\{1,\ldots,i-1\}$ and vertex $v_q$, for $q\in \{i+1,\ldots,p\}$. We may assume that $v_j$ is protected during time step $i-j$ and vertex $v_q$ is protected during time step $q-i$ (see Lemma \ref{lem:adj}). Notice that the vertices $v_{j+1},\ldots,v_{i-1},v_{i+1},\ldots,v_{q-1}$ will not be protected in this strategy. Construct a $(k+p)$-star $G'$ as follows (see Figure \ref{fig:catertostar}):

\begin{itemize}
\item[(a)] delete all vertices $v_1,\ldots,v_j$ as well as the legs at these vertices (all these vertices are saved in our strategy);
\item[(b)] delete all vertices $v_q,\ldots,v_p$ as well as the legs at these vertices (all these vertices are saved in our strategy);
\item[(c)] delete all edges of $P$;
\item[(d)] for every $r\in \{j+1,\ldots,i-1,i+1,\ldots,q-1\}$, let $u_1^r,\ldots,u_{d(v_r)-2}^r$ be the neighbors of $v_r$ not belonging to $P$; delete $v_r$ and replace it by $d(v_r)-2$ vertices $v_{1}^{r},\ldots,v_{d(v_r)-2}^{r}$ such that $v_{l}^{r}$ is adjacent to $u_l^r$ for $l\in \{1,\ldots,d(v_r)-2\}$;
\item[(e)] join every vertex $v_{\ell}^{r}$, for $r\in \{j+1,\ldots,i-1\}$ and $\ell\in \{1,\ldots,d(v_r)-2\}$, to $v_i$ by a path $P^{r\ell}$ of length $i-r$;
\item[(f)] join every vertex $v_{\ell}^{r}$, for $r\in \{i+1,\ldots,q-1\}$ and $\ell\in \{1,\ldots,d(v_r)-2\}$, to $v_i$ by a path $P^{r\ell}$ of length $r-i$;
\end{itemize}

\begin{figure}[!h]
\begin{center}

\input{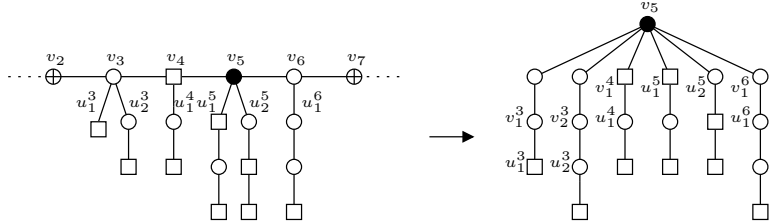}
\end{center}

\caption{The construction of $G'$ with $i=5$, $j=2$, and $q=7$.}
\label{fig:catertostar}

\end{figure}

From the above construction it follows that $G'$ is a $(k+p)$-star with center $v_i$. Now in order to solve our initial problem, we need to solve  \probtitle{Max $b$-Save} in $G'$ with the following additional constraints: for every $r\in \{j+1,\ldots,i-1,i+1,\ldots,q-1\}$ and every $\ell\in \{1,\ldots,d(v_r)-2\}$ we are not allowed to protect the vertices of $V(P^{r\ell})$. Indeed, since we decided to protect $v_j$ and $v_q$, the vertices $v_{j+1},\ldots,v_{i-1},v_{i+1},\ldots,v_{q-1}$ will not be saved. Notice that these vertices are represented by the vertices of paths $P^{r\ell}$ in $G'$. Moreover, if $i-j \neq q-j$ then at time steps $i-j$ and $q-j$ only $b-1$ firefighters are available (since we protect $v_j$ and $v_q$ at these time steps); if $i-j = q-j$ then only $b-2$ firefighters are available at time step $i-j$. It follows from Theorem \ref{th:maxbssave_kstar} and Remark \ref{kstar_general} that this problem can be solved in polynomial-time.

Since the number of choices of a pair of vertices $(v_j,v_q)$ to be protected on $P$ is $(i-1)\times (p-i)$, we can determine in polynomial-time the best strategy to adopt if we want to protect exactly two vertices on $P$. Notice that a similar procedure to the one described above can be used if we decide to protect exactly one vertex on $P$ respectively if we decide not to protect any vertex of $P$. Clearly the number of choices of exactly one vertex $v_j$, $j\in \{1,\ldots,i-1,i+1,\ldots,p\}$, to be protected on $P$ is $p-1$. Thus we conclude that if $s\in V(P)$ we can determine an optimal strategy in polynomial-time.\\

It remains the case when $s\not\in V(P)$. Similar to the proof of Theorem \ref{th:maxbssave_kstar}, we will distinguish several cases. The case when $s$ has degree one is trivial. Thus we may assume now that $deg(s)=2$. If $b\geq 2$, we are done. Thus we may assume now that $b=1$. If both neighbors of $s$ are in $S$, then the optimal solution is clearly $|S|-1$. If both neighbors of $s$ are not in $S$, then the optimal solution is clearly $|S|$. Hence the only case remaining is when exactly one neighbor of $s$ is in $S$. Let $u_1,u_2$ be the neighbors of $s$ such that $u_1\in S,u_2\not\in S$. If $u_2\not\in V(P)$, the optimal solution is clearly $|S|$. Thus we may assume now that $u_2\in V(P)$. In this case we have to compare the value of two solutions: (i) $|S|-1$ which is the value of the solution obtained by protecting first $u_2$ and then, during the second time step, we protect the neighbor of $u_1$ which is not $s$ (if it exists); (ii) the value of the solution obtained by protecting first $u_1$ and then applying our algorithm described above to the graph $G-(Q\setminus \{u_2\})$, where $Q$ is the set of vertices of the unique maximal path starting at $u_2$ and containing $u_1$.
\qedfill
\end{proof}

\section{Variants of {\sc Max $b$-Save}} \label{s:variants}

In this section, we give some results for a weighted version of {\sc Max $b$-Save} as well as for its complementary version. 

\subsection{Weighted version}

We would like to mention that our positive results (Proposition~\ref{prop:polytree}, Theorem~\ref{th:maxbssave_kstar}, and Theorem~\ref{th:maxbssave_kcater}) may be generalized to a weighted version of \probtitle{Max $b$-Save}.

Suppose that we are given a weight $w(v)$ for each vertex $v \in S \subseteq V$. These weights may for instance reflect the importance of the vertices: if $w(v_1) > w(v_2)$, vertex $v_1$ is considered as more important than vertex $v_2$.
Then we may define the following problem:

\medskip

\noindent 
{\sc Max Weighted $b$-Save}

\noindent\textbf{Input}: An undirected graph $G=(V,E)$, a burned vertex $s\in V$, a subset $S \subseteq V$, and a weight function $w:S \to N$.

\noindent\textbf{Output}: A valid strategy $\Phi$  with respect to budget $b$ which maximizes the total weight of the saved vertices that belong to $S$.\\

\medskip

In the proof of Proposition~\ref{prop:polytree}, if we direct the fire to a vertex of degree at most $b+1$ such that the total weight of the burned vertices in $S$ is minimum then we get the following result.

\begin{proposition}
For any $b\geq 1$, \probtitle{Max Weighted $b$-Save} is polynomial-time solvable for trees of maximum degree $b+2$ if the fire breaks out at a vertex of degree at most $b+1$.
\end{proposition}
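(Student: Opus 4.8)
The plan is to mirror the argument already used for the unweighted case in Proposition~\ref{prop:polytree} and its accompanying remark, changing only the quantity we minimize along the burning path. First I would recall the structural observation underlying Proposition~\ref{prop:polytree}: since the fire starts at a vertex of degree at most $b+1$ in a tree of maximum degree $b+2$, each burning vertex has at most $b+1$ neighbors, and so at any time step we can protect all but one of the children of the currently burning vertex. This means we may steer the fire along a single path of our choosing from $s$, protecting everything that hangs off that path, and the fire stops precisely when it reaches a vertex of degree at most $b+1$ (equivalently, a leaf or a vertex whose one unburned neighbor we can also protect once no further spread is forced).

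Next I would make the weighted objective explicit. For any vertex $v$ of degree at most $b+1$, directing the fire along the unique path $\pi_{s,v}$ from $s$ to $v$ and stopping it there burns exactly the vertices on $\pi_{s,v}$, while every other vertex is saved. Hence the total weight lost equals $\sum_{u \in V(\pi_{s,v}) \cap S} w(u)$, and the weight saved is $\sum_{u \in S} w(u)$ minus this quantity. Maximizing the saved weight is therefore equivalent to choosing the target vertex $v$ (among vertices of degree at most $b+1$, including leaves) that minimizes the total weight of the $S$-vertices lying on $\pi_{s,v}$. I would note that if $T$ is complete, then as in Proposition~\ref{prop:polytree} at least one new vertex burns at every step and the fire cannot be stopped before reaching a leaf; in that case the optimum is still obtained by selecting a root-to-leaf path of minimum total $S$-weight, so the same formulation covers both the complete and non-complete cases.

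Finally I would argue polynomiality. There are at most $|V|$ candidate stopping vertices $v$, and for each the path $\pi_{s,v}$ is unique in a tree and its weighted $S$-cost is computable in linear time; more efficiently, a single rooted traversal from $s$ accumulating prefix sums of $w$ over $S$-vertices lets us evaluate every path cost in total linear time and select the minimum. This yields the required polynomial-time algorithm and hence the stated result. I do not expect a genuine obstacle here: the only point requiring a little care is verifying that the weighted optimum is always achieved by a pure ``direct-and-stop'' strategy rather than by a cleverer protection pattern, but this follows from the same reasoning as in the unweighted proof, since with budget $b$ at a vertex of degree at most $b+1$ the fire is forced to advance along at most one edge per step and any valid strategy burns exactly the vertices of some such path.
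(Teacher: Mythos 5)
Your proposal is correct and takes essentially the same route as the paper: the paper's entire proof consists of the remark that one repeats the argument of Proposition~\ref{prop:polytree}, directing the fire to a vertex of degree at most $b+1$ chosen so that the total \emph{weight} (rather than the number) of burned vertices of $S$ is minimized, which is precisely your algorithm. One caveat: your closing claim that ``any valid strategy burns exactly the vertices of some such path'' is false as stated, since a valid strategy may place fewer than $b$ firefighters next to the fire and let it branch; the correct justification (left implicit by the paper as well) is that every strategy is \emph{dominated} by a direct-and-stop path strategy ending at a vertex of degree at most $b+1$, which follows from exchange arguments in the spirit of Remark~\ref{rem:relstrat} and Lemma~\ref{lem:adj}.
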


Now by replacing the costs $u(i,j)$ in the proof of Theorem~\ref{th:maxbssave_kstar} by $u(i,j) = - |\sum_{v \in S_{ij}} w(v)|$ and adapting the case when $s$ is not the center of $G$ according to the weights, it is not difficult to see that we obtain the following.

\begin{theorem}
For any $k\geq 1$, $b\geq 1$, \probtitle{Max Weighted $b$-Save} is polynomial-time solvable for $k$-stars.
\end{theorem}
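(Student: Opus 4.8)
The plan is to adapt the reduction from \probtitle{Max $b$-Save} to \probtitle{Min Cost Flow} given in the proof of Theorem~\ref{th:maxbssave_kstar}, exploiting the fact that the min-cost-flow machinery is already indifferent to the precise numerical values of the arc costs. The only place where the unweighted problem enters the reduction is the cost $u(i,j) = -|S_{ij}|$ assigned to the arc $(L_i,C_j)$, where $|S_{ij}|$ counts the vertices of $S$ saved when $v_{ij}$ is protected. First I would simply replace this cost by the total weight saved, namely $u(i,j) = -\sum_{v \in S_{ij}} w(v)$, and leave every capacity, supply and demand exactly as before.

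With this single substitution, I would replay the two directions of the equivalence verbatim. In the forward direction, a feasible weighted strategy of value $\nu = \sum_{v_{ij}\in\mathcal{D}} \sum_{v\in S_{ij}} w(v)$ yields a flow in $G'$ whose cost is $\sum_{v_{ij}\in\mathcal{D}} u(i,j) = -\nu$, using the same path decomposition $\ell$-$L_i$-$C_j$-$r$ per protected vertex; the capacity constraints (at most $b$ vertices per level $V_i$, exactly one per path $P_j$) are unaffected since they are purely combinatorial. Conversely, a min-cost flow of value $-\mu$ is decomposed into unit paths, each of which designates a vertex $v_{ij}$ to protect, giving a valid strategy whose total weight is exactly $\mu$. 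Hence minimizing the flow cost maximizes the total saved weight, and \probtitle{Min Cost Flow} remains polynomial-time solvable, so the $k$-star case with $s$ at the center is settled.

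The remaining work is the case where $s$ is not the center of $G$, which in the unweighted proof was handled by a short case analysis culminating in a comparison of two candidate solutions when $\deg(s)=2$, $b=1$, and exactly one neighbor of $s$ lies in $S$. Here I would re-examine each case under weights: if both neighbors of $s$ are saved the answer is the total weight of $S$; if one must be sacrificed, the value $|S|-1$ is replaced by $w(S) - w(u)$ for the appropriate burned vertex $u$, and one again compares the weighted value of protecting $u_2$ first against the weighted value of protecting $u_1$ first and recursing via \probtitle{Min Cost Flow} on $G-(Q\setminus\{u_2\})$. The main obstacle, though a mild one, is checking that directing the fire down the path $Q$ burns exactly the intended vertices so that the two weighted alternatives are computed correctly; once that bookkeeping is verified, Remark~\ref{kstar_general} guarantees the flow reduction still applies with the non-constant-budget and forbidden-vertex modifications, and the theorem follows.
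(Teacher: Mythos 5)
Your proposal matches the paper's own argument: the paper proves this theorem precisely by substituting $u(i,j) = -\sum_{v \in S_{ij}} w(v)$ for the unweighted costs in the reduction of Theorem~\ref{th:maxbssave_kstar} and adapting the non-center case analysis to weights, which is exactly what you do (your added observation that, under weights, the ``sacrifice one neighbor'' cases now require an explicit comparison rather than yielding $|S|-1$ outright is the right reading of that adaptation). The proof is correct and essentially identical in approach.
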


Using this result and adapting the case when $s \notin V(P)$ according to the weights, it is straightforward that we obtain the following result.

\begin{theorem}
For any $k\geq 1$, $b\geq 1$, \probtitle{Max Weighted $b$-Save} is polynomial-time solvable for $k$-caterpillars.
\end{theorem}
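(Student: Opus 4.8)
The plan is to mimic the proof of Theorem~\ref{th:maxbssave_kcater} almost verbatim, replacing every cardinality count of saved vertices by the total weight of saved vertices and invoking the preceding weighted $k$-star result in place of its unweighted counterpart. The structure of the argument — guessing which spine vertices are protected and reducing each resulting subproblem to a $k$-star — is entirely unaffected by the presence of weights; only the objective values being compared change.

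First I would treat the case $s \in V(P)$. Exactly as in Theorem~\ref{th:maxbssave_kcater}, I enumerate all $O(p^2)$ choices of the (at most two) spine vertices $v_j$ and $v_q$ that a strategy protects, and for each choice I build the same $(k+p)$-star $G'$ with center $v_i = s$ through the constructions (a)--(f). The only modification is the subproblem solved on $G'$: instead of \probtitle{Max $b$-Save}, I solve \probtitle{Max Weighted $b$-Save} on $G'$, keeping the identical side constraints (the vertices of each path $P^{r\ell}$ are forbidden to be protected, and the budget drops to $b-1$ or $b-2$ at the time steps at which $v_j$ and $v_q$ are protected). The weighted $k$-star theorem guarantees that each such subproblem is polynomial-time solvable, while Remark~\ref{kstar_general} guarantees that forbidding certain protections and using time-varying budgets is still captured by the underlying \probtitle{Min Cost Flow} reduction, now with arc costs $u(i,j) = -\sum_{v \in S_{ij}} w(v)$. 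Returning the heaviest solution over all $O(p^2)$ guesses yields an optimal strategy for $s \in V(P)$ in polynomial time.

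It remains to handle $s \notin V(P)$, which I would resolve by the same finite case analysis, comparing weighted rather than counted values. The degree-one subcase and the subcase $b \geq 2$ are immediate; for $b = 1$ and $deg(s) = 2$ I distinguish according to how many neighbours of $s$ lie in $S$. When exactly one neighbour $u_1 \in S$ while $u_2 \notin S$ and $u_2 \in V(P)$, I compare the two candidate strategies by total saved weight: (i) protecting $u_2$ first and then blocking the spread issuing from $u_1$, and (ii) protecting $u_1$ first and then running the $s \in V(P)$ procedure above on $G - (Q \setminus \{u_2\})$, where $Q$ is the maximal path through $u_2$ containing $u_1$. Returning the heavier of the two is optimal.

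The only point requiring care — and the one I expect to be the main (though minor) obstacle — is verifying that the weights propagate consistently through the two nested reductions: that the sets $S_{ij}$ in each $k$-star subproblem inherit the correct weighted cost, and that the finite case analysis for $s \notin V(P)$ genuinely compares weighted optima rather than cardinalities (in particular, when both neighbours of $s$ lie in $S$ one now forfeits $\min(w(u_1),w(u_2))$ rather than a single unit). All of this is routine once the unweighted argument of Theorem~\ref{th:maxbssave_kcater} and the weighted $k$-star result are in hand.
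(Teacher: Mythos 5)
Your overall strategy is exactly the paper's: rerun the proof of Theorem~\ref{th:maxbssave_kcater} with the weighted $k$-star theorem in place of Theorem~\ref{th:maxbssave_kstar}, using Remark~\ref{kstar_general} and the costs $u(i,j)=-\sum_{v\in S_{ij}}w(v)$ for the constrained star subproblems, and redo the case analysis for $s\notin V(P)$ with weighted values. Your treatment of $s\in V(P)$ and of the subcase ``exactly one neighbour of $s$ in $S$'' is correct and matches the intended adaptation.

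However, your adaptation of the subcase where \emph{both} neighbours of $s$ lie in $S$ (with $b=1$, $deg(s)=2$) is wrong. Write $W$ for the total weight of $S$, $u_1$ for the neighbour of $s$ pointing away from the spine and $u_2$ for the neighbour pointing towards it. You claim the optimum is $W-\min(w(u_1),w(u_2))$. Forfeiting only $w(u_1)$ is always achievable (protect $u_2$, then block the leg below $u_1$), but forfeiting only $w(u_2)$ is achievable in general only when $u_2\notin V(P)$: if $u_2\in V(P)$, protecting $u_1$ at the first time step sends the fire onto the spine at $u_2$, and with one firefighter per step the damage cannot in general be confined to $u_2$ alone, since $u_2$ may have spine neighbours on both sides as well as further legs. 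Concretely, take a leg $u_2$--$s$--$u_1$ with $w(u_1)=100$, $w(u_2)=1$, where $u_2$ lies on a long spine each of whose vertices carries several heavy legs; your formula returns $W-1$, but no strategy achieves this, and the true optimum is $W-100$. The reason the unweighted argument could dispatch this subcase in one line is that every strategy loses at least one of $u_1,u_2$ and losing exactly one is achievable, so the answer is $|S|-1$ no matter which side is sacrificed; with weights the two sides are no longer interchangeable. The fix is the comparison you already use in the ``exactly one neighbour in $S$'' subcase: compare (i) protecting $u_2$ first and blocking the leg, of value $W-w(u_1)$, against (ii) protecting $u_1$ first and running the $s\in V(P)$ procedure on $G-(Q\setminus\{u_2\})$; only when $u_2\notin V(P)$ does your $\min$ formula give the right answer.
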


Although the results above are more general than the results in Sections \ref{s:trees} and \ref{s:caterpillars}, we decided to present in detail the results concerning \probtitle{Max $b$-Save} in this paper, since this corresponds to the version which has been widely studied in the literature.

\subsection{Min version}

Let us consider now the minimum version of the {\sc Max $b$-Save} problem which is defined as follows. 

\medskip

\noindent 
{\sc Min $b$-Save}

\noindent\textbf{Input}: An undirected graph $G=(V,E)$, a burned vertex $s\in V$, a subset $S \subseteq V$.

\noindent\textbf{Output}: A valid strategy $\Phi$ with respect to budget $b$ which minimizes the number of burned vertices that belong to $S$.\\

In contrast to {\sc Max $b$-Save} which is constant approximable on trees, the following theorem shows a strong inapproximability result for {\sc Min $b$-Save} even when restricted to trees.

\begin{theorem}
For any $\epsilon \in (0,1)$ and any $b\geq 1$, \probtitle{Min $b$-Save} is not $n^{1-\epsilon}$-approximable 
even for trees on $n$ vertices when $S$ is the set of all vertices, unless $P=NP$.
\end{theorem}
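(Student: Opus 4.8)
The plan is to give a gap-amplifying reduction from \textsc{$b$-Save} on trees, which is NP-hard for every fixed $b\ge 1$ (for $b=1$ by \cite{king2010} and for $b\ge 2$ by Theorem~\ref{th:npc}), and to show that an $n^{1-\epsilon}$-approximation for \textsc{Min $b$-Save} would separate the yes- and no-instances, hence decide \textsc{$b$-Save} in polynomial time. Starting from an instance $(T,s,S)$ with $S$ the set of leaves of $T$ (Remark~\ref{rem:sleave}) and $m=|V(T)|$, I would build a tree $T'$ by attaching to every leaf $\ell$ of $T$ exactly $b+2$ disjoint copies of the complete gadget $\mathcal{T}(r,h,b+1)$, each root made adjacent to $\ell$, with the height $h$ to be fixed later; set $s'=s$ and $S'=V(T')$. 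Writing $N=|\mathcal{T}(r,h,b+1)|=\frac{(b+1)^{h+1}-1}{b}$ for the gadget size, the total number of vertices is $n=m+(b+2)LN=\Theta(mN)$, where $L\le m$ is the number of leaves.

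First I would bound the optimum in the yes-case. If all leaves of $T$ can be saved with budget $b$, then the very same strategy applied in $T'$ never lets the fire reach a gadget (every leaf guarding $b+2$ gadgets is either protected or cut off from $s$), so only vertices of $T$ can burn and the minimum number of burned vertices is at most $m$. Next, for the no-case I would prove that \emph{every} strategy burns more than $N$ vertices. The first step is a restriction argument: the protections a $T'$-strategy places inside $T$ form a valid strategy for $T$ using at most $b$ firefighters per step, so in a no-instance at least one leaf $\ell$ of $T$ must burn. The second, more delicate step uses Lemma~\ref{lem:adj}: an optimal \textsc{Min $b$-Save} strategy may be assumed to protect only vertices adjacent to a burning vertex, so the $b+2$ gadget roots hanging off $\ell$ can be protected only at the single time step at which $\ell$ is burning; since the budget is $b<b+2$, at least two of these roots catch fire.

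The hard part will be turning ``two burning roots'' into $\Omega(N)$ burned vertices, because a single burning gadget can be contained with budget $b$ while losing only one vertex per level (one reason a lone gadget is not enough, and why $b+2$ copies are attached). With at least two fires inside gadgets of branching $b+1$ sharing a common budget $b$, I would track the number $x_i$ of burning vertices at gadget-level $i$ and use the recurrence $x_{i+1}\ge (b+1)x_i-b$ with $x_0\ge 2$; this yields $x_i\ge (b+1)^i+1$, whence the total number of burned vertices is at least $\sum_{i=0}^{h}(b+1)^i=N$ (a quantitative refinement of Lemma~\ref{lem:gadget}). Thus the optimum is at most $m$ in the yes-case and strictly larger than $N$ in the no-case.

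Finally I would fix the gap. Choosing $h$ so that $N\ge n^{1-\epsilon}m$ --- which, since $n=\Theta(mN)$, only requires $N$ polynomial in $m$, i.e. $h=O(\tfrac1\epsilon\log m)$, keeping the reduction polynomial --- gives $\mathrm{OPT}^{\text{no}}/\mathrm{OPT}^{\text{yes}}> N/m\ge n^{1-\epsilon}$. An $n^{1-\epsilon}$-approximation then returns a value at most $n^{1-\epsilon}m$ on every yes-instance (since $\mathrm{OPT}\le m$) and a value greater than $n^{1-\epsilon}m$ on every no-instance (since $\mathrm{OPT}>N\ge n^{1-\epsilon}m$), so comparing the returned value with the threshold $n^{1-\epsilon}m$ decides \textsc{$b$-Save}; as the latter is NP-hard, this forces $P=NP$.
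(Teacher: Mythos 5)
Your proof is correct, and it follows the paper's overall skeleton (a gap reduction from \textsc{$b$-Save}, which is NP-hard for every fixed $b\geq 1$, followed by a threshold argument showing that an $n^{1-\epsilon}$-approximation would decide the hard problem), but the gadget --- and hence the heart of the no-case lower bound --- is genuinely different. The paper attaches to each leaf $\ell$ of $T$ a single star of $\lfloor n_1^{\beta}+b\rfloor$ pendant vertices, with $\beta=\frac{4}{\epsilon}-3$ and $n_1=|V(T)|$; the no-case is then immediate: once $\ell$ burns, its pendant children can only be protected at the one time step at which $\ell$ is newly burned (this is where Lemma~\ref{lem:adj} enters, exactly as in your argument), so all but $b$ of them burn, giving at least $n_1^{\beta}$ burned vertices, and the explicit calibration $n<n_1^{\beta+3}$ yields the gap in two lines. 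You instead attach $b+2$ copies of $\mathcal{T}(r,h,b+1)$ with $h=O(\frac{1}{\epsilon}\log m)$, which forces you to prove a quantitative sharpening of Lemma~\ref{lem:gadget}: your recurrence $x_{i+1}\geq (b+1)x_i-b$ with $x_0\geq 2$ is sound, because the two unprotected roots ignite simultaneously, the fire fronts inside the gadgets stay level-synchronized, and Lemma~\ref{lem:adj} confines all protections at gadget level $i+1$ to the single step at which level $i$ burns, so the shared budget $b$ is split against $(b+1)x_i$ threatened children; this gives $x_i\geq (b+1)^i+1$ and a total of more than $N$ burned vertices, as you claim. The trade-off is clear: the paper's star gadget is shorter and its exponent bookkeeping is explicit, but its trees have unbounded degree; your construction (which mirrors the one the paper uses for the NP-hardness of \textsc{Max $b$-Save}) keeps maximum degree $b+3$, so it in fact proves the slightly stronger statement that \textsc{Min $b$-Save} is not $n^{1-\epsilon}$-approximable even on trees of maximum degree $b+3$.
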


\begin{proof}
We construct a polynomial-time reduction from \probtitle{$b$-Save} to \probtitle{Min $b$-Save}.
Let $I$ be an instance of \textsc{$b$-Save} consisting of a tree
$T=(V,E)$ with $|V| = n_1$, a burned vertex $s \in V$, and a subset $S \subseteq V$ which corresponds
to the set of leaves. We construct an instance $I'$ of
\textsc{Min $b$-Save} consisting of a tree $T'=(V', E')$ with $|V'|=n$, a burned vertex $s'$, and $S' = V'$ as follows. For every leaf $\ell$ of $T$, add $\lfloor n_1^{\beta}+b \rfloor$ vertices adjacent to $\ell$ where $\beta = \frac{4}{\epsilon}-3$. Notice that $n = \lfloor n_1^{\beta}+b \rfloor |S| + n_1 < n_1^{\beta + 3}$.

If there exists a strategy that saves all the vertices in $S$ then at most $n_1$ vertices are burned in $V'$. Conversely, if there is no strategy that saves all the vertices in $S$ then at least $n_1^{\beta}$ vertices are burned in $V'$. 

Suppose that there exists a polynomial-time $n^{1-\epsilon}$-approximation algorithm $A$ for \probtitle{Min $b$-Save}. Thus, if $I$ is a \textit{yes}-instance, the algorithm gives a solution of value $A(I') \leq n^{1-\epsilon}n_1 < n_1^{(\beta + 3)(1-\epsilon)+1} = n_1^{\beta}$.
If $I$ is a \textit{no}-instance, the solution value is $A(I') \geq n_1^{\beta}$. Hence, the approximation algorithm $A$ can distinguish in polynomial time between \textit{yes}-instances and \textit{no}-instances for \probtitle{$b$-Save} implying that $P=NP$.
\end{proof}

\section{Conclusion}\label{s:concl}

In this paper, we studied some generalizations and variants of the firefighter problem when more than one firefighter is available at each time step and we answered three open questions of \cite{finbow2009}.
 Several interesting questions remain open. The complexity of \probtitle{$b$-Save} and \probtitle{Max $b$-Save} in the following cases are not known: when the number of firefighters at each time step depends on the number of vertices; when every leaf is at the same level. The complexity of \probtitle{Max $b$-Save} for trees of maximum degree $b+2$ is not establish.
 Finally, the problem is 2-approximable for trees when $S$ is the set of vertices. Establishing non approximability results or better approximability results is another open problem.


\bibliographystyle{abbrv}
\bibliography{biblio}

\end{document}